\documentclass[conference]{IEEEtran}
\IEEEoverridecommandlockouts
\usepackage{cite}
\usepackage{amsmath,amssymb,amsfonts}
\usepackage{amsthm}
\usepackage{graphicx}
\usepackage{textcomp}
\usepackage{xcolor}
\usepackage{algorithm}
\usepackage{algpseudocode}
\usepackage{array}
\usepackage{multirow}
\usepackage{caption}
\usepackage{comment}
\usepackage{subcaption}
\usepackage{multicol}
\usepackage{graphicx}
\usepackage{asymptote}
\usepackage{array}

\newtheorem{lemma}{Lemma}
\newtheorem{remark}{Remark}

\def\BibTeX{{\rm B\kern-.05em{\sc i\kern-.025em b}\kern-.08em
    T\kern-.1667em\lower.7ex\hbox{E}\kern-.125emX}}

\newtheorem{theorem}{Theorem}
\begin{document}

\title{Convexity and Optimization in Deficit Round Robin Scheduling for Delay-Constrained Systems\\
}
\author{Aniket~Mukherjee, Joy~Kuri and~Chandramani~Singh
}

\maketitle
\thispagestyle{plain}
\pagestyle{plain}
\begin{abstract}
The Deficit Round Robin (DRR) scheduler is widely used in network systems for its simplicity and fairness. However, configuring its integer-valued parameters, known as quanta, to meet stringent delay constraints remains a significant challenge. This paper addresses this issue by demonstrating the convexity of the feasible parameter set for a two-flow DRR system under delay constraints. The analysis is then extended to n-flow systems, uncovering key structural properties that guide parameter selection. Additionally, we propose an optimization method to maximize the number of packets served in a round while satisfying delay constraints. The effectiveness of this approach is validated through numerical simulations, providing a practical framework for enhancing DRR scheduling. These findings offer valuable insights into resource allocation strategies for maintaining Quality of Service (QoS) standards in network slicing environments.
\end{abstract}

\begin{IEEEkeywords}
Deficit Round Robin (DRR), Delay requirements, Convexity, Optimization. 
\end{IEEEkeywords}
\section{Introduction and Related Work}
Deficit Round Robin (DRR) \cite{Shree-varghese-drr-96, Lenzini-Adrr-02} is a popular scheduling algorithm known for its simplicity and fairness in handling multi-flow systems. It works by assigning a quantum to each flow, which determines the amount of data served in each scheduling round. This mechanism allows proportional allocation of service capacity based on flow-specific requirements. Despite its advantages, configuring DRR—particularly determining the optimal quantum values—remains challenging, especially in systems with diverse traffic patterns and strict delay requirements.

In this paper, we study a single-node communication system with $n$ incoming data flows served by a constant-rate server of capacity $c$. Each flow is modeled using a bursty arrival envelope (leaky bucket) and is associated with specific delay requirements. Using a DRR scheduler, the goal is to allocate service efficiently while meeting these delay constraints, maintaining resource utilization, and minimizing the number of context switches between flows. This balance becomes increasingly difficult as the number of flows grow, underscoring the need for systematic quanta configuration.

Recent studies have focused on delay bounds and worst-case delay analysis for DRR schedulers \cite{Boyer_2012, SEYED_ET_ALL_DRR_22, Bouillard_bandwidth-sharing_21, Hua_DRR_bound, Soni_DRR_bound_2018} using Network Calculus \cite{NC_GUIDE} and their application to modern technologies such as network slicing and time-sensitive networking \cite{Coronado-Lasagna-2018, Ali-DRR-2021}. While the approach Fuzzy Adaptive DRR (FADRR) \cite{Ali-fuzzyDRR-2014} dynamically adjusts quantum values to meet QoS targets it lacks the precision required for strict delay guarantees. \cite{soni_2019, Soni_2020} propose binary search-based heuristic algorithms to optimize bandwidth allocation. Our work builds on existing delay-bound analyses, particularly the recent bounds derived in \cite{SEYED_ET_ALL_DRR_22}, to address the following fundamental questions:
\begin{itemize}
    \item \textbf{Existence:} Can we find a vector of quanta that satisfies all delay requirements for multiple flows?
    \item \textbf{Optimization:} If such a vector exists, how can we determine the optimal quantum values that balance delay guarantees, minimize switching overhead, and maximize resource utilization?
\end{itemize}

Using a modified delay bound based on \cite{SEYED_ET_ALL_DRR_22}, we prove that the feasibility set of quanta is convex. This insight simplifies the optimization process, allowing for efficient computation of optimal quanta that minimize scheduling inefficiencies. By focusing on reducing the number of context switches while ensuring delay constraints are met, we strike a balance between system performance and fairness.
The paper is organized as follows: Section \ref{SM} introduces the system model and provides the foundation for the discussions that follow. Section \ref{MB} presents the delay bound, the modified delay bound, and formulates the optimization problem addressed in this paper. Section \ref{C2F} establishes the properties of the optimization problem's constraint set, including conditions for non-emptiness and convexity. Section \ref{OP2F} proposes two algorithms to determine the optimal quanta values that meet the deadline and proves that the algorithms converge to the optimal solution and presents the simulation results, and Section \ref{conclusion} summarizes the key contributions and findings.
\section{System Model}\label{SM}
In this section, we present the system model for the problem of finding the optimal quanta of DRR scheduler to meet the delay requirements and also minimize the switching. Our focus is on a single-node system that operates with a constant-rate server, supporting multiple incoming data flows. Each flow exhibits bursty traffic characteristics, which are managed by DRR scheduling algorithm. We describe the system in detail, including the traffic model, the scheduling algorithm, and the key assumptions made to facilitate analysis.
\subsection{System Setup}
We consider a communication system where a constant-rate server, with service rate $c$, serves $n$ incoming data flows. Each flow is indexed by $i \in {1, 2, \dots, n}$. The data traffic for each flow is modeled using a bursty rate arrival envelope, commonly referred to as a leaky bucket model. This model allows us to capture both the burstiness and the rate-limited nature of traffic. The bursty behavior of flow $i$ is parameterized by two parameters: the burst size $b_i$, which is the maximum burst size that the flow can generate, and the arrival rate $r_i$, which specifies the rate at which data arrives for this flow over time. The total data arrival for flow $i$ over the time interval $(0, t)$ is thus bounded above by the following affine function: $b_i+r_i t$. The objective is to efficiently allocate the server's resources by determining the parameter of DRR scheduler for each flow while ensuring that their respective delay constraints are satisfied.

\subsection{Scheduling: Deficit Round Robin (DRR)}

Deficit Round Robin (DRR) is a scheduling algorithm commonly used in real-time systems and communication networks to allocate service among multiple queues or flows. Each flow $i$ is assigned a fixed quantum $q_i$, which determines the amount of data it can transmit in each round of scheduling. The system maintains one queue per input and runs an infinite loop of scheduling rounds. In each round, if a queue is non-empty, its service begins, and its deficit is increased by $q_i$. The scheduler serves packets from the queue until the deficit is smaller than the size of the head-of-the-line packet or the queue becomes empty. If the queue becomes empty, the deficit is reset to zero; otherwise, the remaining deficit (residual deficit) is carried over to the next round. This ensures that flows receive service proportional to their assigned quantum. 

\begin{figure}[htbp!]
    \centering
    \includegraphics[width=0.5\textwidth]{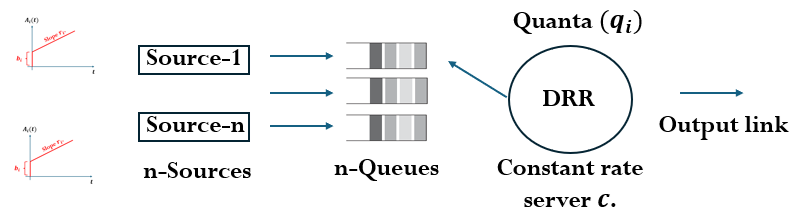}
    \caption{The system as defined has $n$-sources, that are bounded by leaky bucket sources (parameterized with $r_i, b_i$), and a DRR-scheduler with constant rate $c$, such that the quanta are $q_i^{th}$ for $i^{th}$ queue.}
    \label{system_pic}
\end{figure}

\subsection{Delay Requirements and the Impact of Quanta}
Each flow in the system has a specific delay requirement denoted as $d_i$. The delay requirement represents the maximum permissible delay for the data flow to meet quality-of-service (QoS) constraints. We analyze how the quanta $q_i$ assigned to each flow can affect the delay requirements and whether we can satisfy these requirements for all flows.

\begin{table}[htbp]
\caption{System Model}
\begin{center}
\begin{tabular}{|c|p{5.5cm}|}
\hline
\multicolumn{2}{|c|}{\textbf{Input Parameters}} \\
\hline
$b_i$ & Burst $i^{th}$ flow\\
\hline
$r_i$ & Arrival rate $i^{th}$ flow\\
\hline
$d_i$ & Delay requirement $i^{th}$ flow\\
\hline
$L$ & Max residual deficit\\
\hline
$c$ & Capacity of the server\\
\hline
$D_i$ & Upper bound on the delay of $i^{th}$ flow as per \cite{SEYED_ET_ALL_DRR_22}\\
\hline
$\hat{D}_i$ & Modified upper bound\\
\hline
\multicolumn{2}{|c|}{\textbf{Variable}} \\
				\hline
$q_i$ & Quanta (DRR-parameter) for $i^{th}$ flow\\
\hline
\end{tabular}
\label{TSN_system_model}
\end{center}
\end{table}

Therefore, to facilitate the analysis and make the problem more tractable, we introduce an modified upper bound on the delay, which is simpler to analyze. The introduction of this upper bound reveals that the set of quanta satisfying the delay requirements form a convex set, which greatly simplifies the search for an optimal solution. All the parameters and variables are stated in Table~\ref{TSN_system_model}








\section{Delay Bound analysis and optimization problem}\label{MB}
In this section we define the delay bound of $i^{th}$ as presented in \cite{SEYED_ET_ALL_DRR_22} and adapt it to our system. The delay bound for flow $i$ is given by the following expression:

\begin{align}\label{seyed_bound}
    D_i = \max\left(\frac{\Psi_i(b_i)}{c}, \frac{\Psi_i(\alpha_i(\tau_i))}{c} - \tau_i\right),
\end{align}
where:
\begin{align}
    \tau_i &= \frac{q_i - \left(b_i + L\right) \bmod q_i}{r_i},\nonumber\\
    \Psi_i(x) &= x + \sum_{j \neq i} \phi_{i,j}(x), \label{psi} \\
    \phi_{i,j}(x) &= \left\lfloor \frac{x + L}{q_i} \right\rfloor q_j + (q_j + L). \label{phi}
\end{align}

In the above equations, we define several important components:
$D_i$ represents the delay bound for flow $i$, which is the maximum delay the flow can experience.
$\Psi_i(x)$ represents the total service demand for flow $i$, which includes the data generated by flow $i$ itself as well as the contributions from other interfering flows. The term $\phi_{i,j}(x)$ is the contribution from flow $j$ to the service demand of flow $i$, and it is computed based on the quantum assigned to flow $i$ and the interference from other flows.
$\alpha_i(\tau_i)$ is the arrival curve for flow $i$, which specifies the maximum number of bits that can arrive at flow $i$ by time $\tau_i$.
    
We propose a modified delay bound, $\hat{D}_i$. This modification simplifies the analysis while ensuring that the original delay bound $D_i$ is always less than or equal to $\hat{D}_i$.
\begin{lemma}[Modified Delay Bound]\label{modifieddelaybound}
    Let $D_i$ denote the delay bound as defined in \cite{SEYED_ET_ALL_DRR_22}. Then, the modified delay bound $\hat{D}_i$ satisfies $D_i \leq \hat{D}_i$, where:
    \begin{align}
        \hat{D}_i &= \frac{b_i + L}{c} \left(1 + \frac{\sum_{j \neq i} q_j}{q_i}\right) 
        + \frac{\sum_{j \neq i} q_j}{c} + \frac{(n-2)L}{c}\nonumber \\
        &\quad + \left(q_i \left(\frac{r_i - c}{r_i c}\right) + \frac{\sum_{j \neq i} q_j}{c}\right)^+.\label{modifiedboundreal}
    \end{align}
\end{lemma}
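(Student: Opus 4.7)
The plan is to match the two pieces of $D_i$ to the two parts of $\hat{D}_i$ via the identity $\max(A,B) = A + (B-A)^+$, which mirrors the $(\cdot)^+$ term in $\hat{D}_i$. Write $s := (b_i+L) \bmod q_i$ and $Y := \sum_{j\neq i} q_j/c$. I would first compute the difference between the two quantities inside the max \emph{exactly}, rather than bounding it. The key algebraic observation is that $\alpha_i(\tau_i) + L = \lceil (b_i+L)/q_i\rceil\, q_i$, which forces $\lfloor (\alpha_i(\tau_i)+L)/q_i\rfloor - \lfloor (b_i+L)/q_i\rfloor = 1$ in every case, including $s = 0$. Hence $\phi_{i,j}(\alpha_i(\tau_i)) - \phi_{i,j}(b_i) = q_j$, and combined with $\alpha_i(\tau_i) - b_i = q_i - s$, this gives the exact identity $\Psi_i(\alpha_i(\tau_i)) - \Psi_i(b_i) = (q_i - s) + \sum_{j\neq i} q_j$. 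Dividing by $c$ and subtracting $\tau_i = (q_i-s)/r_i$, the exact difference $\Delta := \Psi_i(\alpha_i(\tau_i))/c - \tau_i - \Psi_i(b_i)/c$ equals $(q_i - s)(r_i - c)/(r_i c) + Y$.

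Second, I would relax $\Psi_i(b_i)/c$ to $U_1 := \frac{b_i+L}{c}\bigl(1 + \frac{\sum_{j\neq i} q_j}{q_i}\bigr) + Y + \frac{(n-2)L}{c}$ by replacing $\lfloor (b_i+L)/q_i\rfloor$ with $(b_i+L)/q_i$ in each $\phi_{i,j}(b_i)$. A short direct computation shows that the slack is exactly $U_1 - \Psi_i(b_i)/c = (s/q_i)\, Y$, a fact that will be essential in the final step. Combining with the max identity, $D_i = \Psi_i(b_i)/c + \Delta^+ = U_1 - (s/q_i)Y + \Delta^+$, so the claim $D_i \leq \hat{D}_i$ reduces to the purely scalar inequality $\Delta^+ \leq (X + Y)^+ + (s/q_i)\, Y$, where $X := q_i(r_i - c)/(r_i c)$ is the expression appearing in the final term of $\hat{D}_i$.

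The main obstacle, and the subtle part of the argument, is precisely this last scalar inequality. It is \emph{not} simply $\Delta \leq X + Y$: when $r_i < c$, the factor $(r_i - c)/(r_i c)$ is negative, so replacing $(q_i - s)$ by $q_i$ in $\Delta$ goes in the wrong direction. The inequality nonetheless holds because the slack $(s/q_i)Y$ pulled out of the $\Psi_i(b_i)/c$ relaxation exactly compensates for this deficit. I would verify it by writing $t := s/q_i \in [0,1)$ and checking the inequality $((1-t)X + Y)^+ \leq (X+Y)^+ + tY$ by a short case analysis on the sign of $X$ and on whether $Y$ exceeds $|X|$ or $(1-t)|X|$; in each case it reduces to an elementary convex-combination bound. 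Once this is established, chaining $\Psi_i(b_i)/c \leq U_1 - (s/q_i)Y$ with $\Delta^+ \leq (X+Y)^+ + (s/q_i)Y$ inside $D_i = \Psi_i(b_i)/c + \Delta^+$ yields $D_i \leq U_1 + (X+Y)^+ = \hat{D}_i$, completing the proof.
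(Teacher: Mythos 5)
Your proof is correct, and it is a valid, slightly different organization of the same basic argument the paper uses. Both proofs write $D_i$ as a common base plus $\max(M_1,M_2)$, remove the floors, and handle the replacement of $q_i-s$ by $q_i$; the difference is where the relaxation is performed. The paper first upper-bounds \emph{both} arguments of the max by dropping the floor in $M_1$ (introducing slack $tY$ there, with $t=s/q_i$) while the corresponding bound on $M_2$ is in fact exact since $\alpha_i(\tau_i)+L=\bigl(\lfloor (b_i+L)/q_i\rfloor+1\bigr)q_i$; the difference of the two relaxed arguments then comes out as the clean product $(1-t)(X+Y)$, so the final step is just $\bigl((1-t)(X+Y)\bigr)^+\le (X+Y)^+$. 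You instead keep the max decomposition exact, obtain $\Delta=(1-t)X+Y$, move the slack $tY$ outside as an additive correction, and are therefore left with the less obvious scalar inequality $\bigl((1-t)X+Y\bigr)^+\le (X+Y)^+ + tY$. That inequality does hold, and more cleanly than by case analysis: writing $(1-t)X+Y=(1-t)(X+Y)+tY$ and using subadditivity and positive homogeneity of $(\cdot)^+$ gives $\bigl((1-t)(X+Y)+tY\bigr)^+\le (1-t)(X+Y)^+ + tY\le (X+Y)^+ + tY$. Your concern that ``replacing $q_i-s$ by $q_i$ goes the wrong way when $r_i<c$'' is exactly the subtlety the paper's ordering of relaxations sidesteps (its positive-part argument is $(1-t)(X+Y)$, not $(1-t)X+Y$), so your extra compensation term is genuinely needed on your route but not on the paper's. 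What your version buys is a transparent accounting of where every unit of slack comes from, at the cost of a slightly harder terminal inequality; the paper's version buys a trivial terminal step at the cost of obscuring that the floor removal in $M_2$ is actually tight.
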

\begin{proof}
    See Appendix~\ref{app:lemma_proof}
\end{proof}

The modified delay bound in Lemma~\ref{modifieddelaybound} is met with equality when $(b_i + L) \bmod q_i = 0$. 
To simplify the analysis, we assume that the quanta $q_i$ are real-valued, rather than integer-valued. This assumption simplifies the problem and enables us to leverage the structure of the modified upper bound, making it easier to investigate the convexity of the parameter set that satisfies the delay requirements, specifically ensuring that $\hat{D}_i \leq d_i$ for each flow.

It is important to note that the set of $(q_1, q_2)$ that satisfy the original delay constraints, \( D_i \leq d_i \), is not convex. To demonstrate this, consider the following parameter values:  $b_1 = b_2 = 10, r_1 = r_2 = 1, d_1 = d_2 = 1, c = 40, L = 3$. We compute the delay bounds for both flows using \eqref{seyed_bound} at the points $(q_1, q_2) = (5, 9), (6,10)$ and $(7,11)$. The corresponding delay bounds are:  
$(D_1, D_2) = (1, 0.575), (1.075, 0.625)$ and  $(0.875, 0.675).$ From these values, we observe that the points $ (5,9) $ and $(7,11)$ satisfy the delay constraints, whereas the point $(6,10)$ does not. This confirms that the set of $(q_1, q_2)$ that satisfy the delay requirement $D_i \leq d_i$ is not convex.

Introducing the modified bound $\hat{D}_i$ is important because it makes the problem easier to handle. This change helps us to analyze the shape of the feasible set, which is a key step for solving the optimization problem.

From Lemma~\ref{modifieddelaybound}, we know that $D_i \leq \hat{D}_i \leq d_i$. This means that if the modified upper bound $\hat{D}_i$ satisfies the delay requirement $d_i$, then the original delay bound $D_i$ will also be met. So, ensuring $\hat{D}_i \leq d_i$ is enough to guarantee that the delay requirement is satisfied.

We show that the set of quanta $q = (q_1, \dots, q_n)$ that satisfy the modified delay bounds for both flows, i.e., $\hat{D}_i \leq d_i$ for all $i \in \{1, \dots, n\}$ forms a convex set. To do this, we define function $f^i(q)$ that represent the deviations from the delay requirements for each flow i.e., $\{q > 0 |\hat{D}_i \leq d_i\} = \{q > 0| f^i(q) \leq 0\}$. 

The function $f^i(q)$, which represents the deviation for flow $i$, is given by:
\begin{align}
f^i(q) &= \frac{b_i + L}{c} \left(1 + \frac{\sum_{j \neq i} q_j}{q_i}\right) - d_i + \frac{\sum_{j \neq i} q_j}{c} 
    + \frac{(n-2)L}{c} \nonumber\\ & + \left(q_i \left(\frac{r_i - c}{r_i c}\right) + \frac{\sum_{j \neq i} q_j}{c}\right)^+. \label{deviation_n_flow}
\end{align}
where the terms are defined as before.
We express the sets of quanta $q$ that satisfy the delay constraints for each flow as:
\begin{align}\label{2-flow-fesability-set}
\mathcal{D}^i &= \{q > 0 \mid f^i(q) \leq 0\}
\end{align}

To maximize the number of packets served per round, our goal is to maximize the total quanta while ensuring that the delay constraints for each of the $n$ flows are satisfied. This optimization problem is formulated as follows:

\begin{equation}\label{n-flow-opti}
\begin{aligned}
   \text{maximize} \quad & \sum_{i=1}^{n} q_i \\
   \text{subject to} \quad 
        & q\in \bigcap_{j=1}^{n} \mathcal{D}^j,
\end{aligned}
\end{equation}

where the feasible region is defined by the intersection of the constraint sets $\mathcal{D}^j$ for each flow $j=1,2,\dots,n$. These constraint sets determine the permissible values for the quanta of each flow, ensuring that the system adheres to the required delay constraints. The objective of this optimization problem is to maximize the total sum of all quanta, thereby maximizing the number of packets that can be served in each scheduling round while maintaining system efficiency and fairness.

\section{Properties of constraint set}\label{C2F}
In this section, we establish necessary conditions to ensure that the feasible set of the optimization problem defined in \eqref{n-flow-opti} is non-empty we also demonstrate that the feasible set is convex. First, we derive a necessary condition for the existence of a feasible solution in the case of $n$ flows. To prove convexity, we begin by analyzing the two-flow case, demonstrating that its feasible set is convex. We then extend this result to the general $n$-flow case, proving that the feasible set remains convex for any number of flows.

\subsection{Necessary Conditions for a Non-Empty Feasible Set}

In this subsection, we establish the necessary conditions for the feasible set to be non-empty. The feasible set is given by  

\begin{equation}
    \bigcap_{i=1}^n \{ q > 0 \mid f^i(q) \leq 0 \}.
\end{equation}

From \eqref{deviation_n_flow}, we observe that the condition $f^i(q) \leq 0$ implies  

\begin{align}\label{implication_non_empty_n}
    \frac{b_i + L}{c} \left( 1 + \frac{\sum_{j \neq i} q_j}{q_i} \right) 
    - d_i + \frac{\sum_{j \neq i} q_j}{c} + \frac{(n-2)L}{c} \leq 0.
\end{align}

Using the implication in \eqref{implication_non_empty_n}, we derive the necessary condition for the non-emptiness of the feasible set.

\begin{lemma}\label{n_flow_non_empty_lemma}
    If the set $\bigcap_{j=1}^n \mathcal{D}^j$ is non-empty, then  
    \begin{equation}
        c \geq \sum_{i} \frac{b_i + L}{q_i}.
    \end{equation}
\end{lemma}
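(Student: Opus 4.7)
The plan is to derive the bound directly from the defining inequalities $f^i(q)\le 0$ for every $i$, without invoking anything about convexity or the geometry of $\bigcap_j\mathcal{D}^j$. I would fix an arbitrary $q \in \bigcap_{j=1}^n \mathcal{D}^j$ (guaranteed to exist by the hypothesis of non-emptiness) and work entirely at that point.

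First, I would write out the condition $f^i(q)\le 0$ as in \eqref{implication_non_empty_n}, dropping the nonnegative $(\cdot)^+$ term to obtain the weaker but cleaner inequality
\begin{equation*}
\frac{b_i+L}{c}\Bigl(1 + \tfrac{\sum_{j\neq i} q_j}{q_i}\Bigr) + \frac{\sum_{j\neq i} q_j}{c} + \frac{(n-2)L}{c} \;\le\; d_i.
\end{equation*}
Setting $S := \sum_{j=1}^n q_j$, the parenthesized factor collapses to $S/q_i$, so the dominant term on the left becomes $(b_i+L)S/(cq_i)$. Dropping the remaining nonnegative summands on the left gives the per-flow bound
\begin{equation*}
\frac{b_i+L}{q_i} \;\le\; \frac{c\,d_i}{S}.
\end{equation*}

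Second, I would sum this over $i=1,\dots,n$ to get $\sum_i (b_i+L)/q_i \le (c/S)\sum_i d_i$. Obtaining the target inequality then reduces to showing $\sum_i d_i \le S$, i.e., that the total quantum is at least as large as the total delay budget (after appropriate normalization). I expect this to fall out by not throwing away the $(S-q_i)/c$ and $(n-2)L/c$ pieces: summing those retained terms in the original inequality over $i$ gives contributions of order $(n-1)S/c$ and $n(n-2)L/c$ on the left versus $\sum_i d_i$ on the right, and the resulting combined inequality should be strong enough to absorb the $\sum_i d_i / S$ factor down to $1$.

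The main obstacle is exactly this last accounting step: keeping the right collection of nonnegative terms when I drop things, so that the combination forces $\sum_i d_i \le S$ (or directly $\sum_i(b_i+L)/q_i \le c$) rather than just the weaker bound $\sum_i(b_i+L)/q_i \le c\sum_i d_i/S$. If the discarded $(S-q_i)/c$ and $(n-2)L/c$ terms are not enough on their own, I would bring back the $(\cdot)^+$ term — which encodes a stability-like requirement tying $q_i$ to $r_i$ and the aggregated competing quanta — and use it to close the gap. I would also double-check the dimensional consistency of the claimed bound against the definitions of $c$, $b_i$, $L$, and $q_i$; if a factor is implicitly normalized (e.g., by $S$ or by a reference delay), the derivation above naturally suggests where that normalization enters.
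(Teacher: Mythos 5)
Your first step reproduces the paper's proof exactly: fix $q\in\bigcap_{j}\mathcal{D}^j$, drop the nonnegative $(\cdot)^+$ term in \eqref{implication_non_empty_n}, collapse $1+\tfrac{\sum_{j\neq i}q_j}{q_i}$ to $S/q_i$ with $S:=\sum_j q_j$, and discard the remaining nonnegative summands to arrive at the per-flow bound $\tfrac{b_i+L}{q_i}\le\tfrac{c\,d_i}{S}$, equivalently $\tfrac{b_i+L}{c\,d_i}\le\tfrac{q_i}{S}$. Where the paper and you diverge is the summation: the paper sums the \emph{normalized} form $\tfrac{b_i+L}{c\,d_i}\le\tfrac{q_i}{S}$ over $i$, so the right-hand sides telescope to $\sum_i q_i/S=1$ and one immediately gets $\sum_i\tfrac{b_i+L}{d_i}\le c$ with no further accounting step at all.

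Your proposed finish is the genuine gap: you sum $\tfrac{b_i+L}{q_i}\le\tfrac{c\,d_i}{S}$ and then try to establish $\sum_i d_i\le S$, and no bookkeeping of the discarded $(S-q_i)/c$, $(n-2)L/c$, or $(\cdot)^+$ terms can close this, because the claim is false. Take $n=2$, $b_i=L=r_i=d_i=1$, $c=100$, and $q_1=q_2=\epsilon$: then $f^i(q)=0.04-1+0.01\epsilon+\bigl(-0.98\epsilon\bigr)^+<0$, so $q$ is feasible for every small $\epsilon>0$, yet $S=2\epsilon\ll\sum_i d_i=2$; moreover $\sum_i(b_i+L)/q_i=4/\epsilon$ exceeds $c$ once $\epsilon<0.04$. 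This same example shows the lemma \emph{as printed} is unprovable because its display contains a typo: $q_i$ stands where $d_i$ should be, as confirmed both by the paper's own proof body and by the later use in Lemma~\ref{fixed_point_n}, where $\tfrac{1}{c}\sum_j\tfrac{b_j+L}{d_j}\le 1$ is invoked (the paper's closing display ``$\sum_i\tfrac{b_i+L}{c\,d_i}\le c$'' is likewise a typo for ``$\le 1$''). Your own dimensional sanity check was pointing at exactly this: $(b_i+L)/q_i$ is dimensionless while $c$ is a rate, whereas $(b_i+L)/d_i$ is a rate. So keep your first step unchanged, but sum the inequality in the form $\tfrac{b_i+L}{c\,d_i}\le\tfrac{q_i}{S}$ rather than hunting for $\sum_i d_i\le S$; the corrected conclusion $c\ge\sum_i\tfrac{b_i+L}{d_i}$ then follows in one line.
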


\begin{proof}
    Let $q \in \bigcap_{j=1}^n \mathcal{D}^j$, which implies that for any $i \in \{1, \dots, n\}$,
    \begin{align}
        \sum_{j \neq i} q_j &\leq \frac{q_i (c d_i - b_i - (n-1)L)}{b_i + L + q_i}.
    \end{align}
    
    Furthermore, we have
    \begin{align}
        \frac{\sum_{j} q_j}{q_i} + \frac{\sum_{j \neq i} q_j}{b_i + L} 
        &\leq \frac{c d_i}{b_i + L} - \frac{b_i + nL}{b_i + L}.
    \end{align}
    
    Since $\frac{\sum_{j\neq i} q_j}{b_i + L} > 0$ and $\frac{b_1 + nL}{b_i + L} > 0$, we can rewrite the above inequality as  
    \begin{align}
        \frac{b_i + L}{c d_i} \leq \frac{q_i}{\sum_j q_j}.
    \end{align}

    We have $n$ such inequalities for each $i \in \{1, \dots, n\}$. Summing all these inequalities, we obtain  
    \begin{align}
        \sum_{i} \frac{b_i + L}{c d_i} \leq c.
    \end{align}
\end{proof}

The implication of Lemma~\ref{n_flow_non_empty_lemma} is that it establishes the minimum bandwidth necessary to meet the system’s deadline constraints. Specifically, the total capacity must exceed the sum of the clearing times necessary to process the bursts of all flows. This ensures that the system has sufficient resources to handle the maximum load without violating timing constraints. By providing a clear criterion for bandwidth allocation, the lemma highlights the critical relationship between flow characteristics and system performance.

\subsection{Convexity of feasible set: Two flow case}
In this subsection, we demonstrate that the feasible set $\mathcal{D}^1 \cap \mathcal{D}^2$ is convex. From the definition of $f^i(q)$ in \eqref{deviation_n_flow}, we can express the same function for the two-flow case, specifically for flow 1:

\begin{align}\label{2-flow-deviations}
    f^1(q_1, q_2) &= \frac{b_1 + L}{c} \left(1 + \frac{q_1 + q_2}{q_1}\right) - d_1 + \frac{q_2}{c} 
     \nonumber\\ & + \left(q_1 \left(\frac{r_1 - c}{r_1 c}\right) + \frac{q_2}{c}\right)^+.
\end{align}

To facilitate our analysis, we introduce the following auxiliary functions that decompose different components of \eqref{2-flow-deviations}:

\begin{itemize}
    \item \textbf{Linear constraint term}:  
    \begin{align}
        a(q_1, q_2) &= \frac{q_1 (r_1 - c)}{r_1 c} + \frac{q_2}{c}, \label{affine-constraint-part}
    \end{align}
    This term appears inside the max function $(\cdot)^+$, determining when the second component in $f^1(q_1, q_2)$ is active.

    \item \textbf{When linear term is inactive ($a(q_1, q_2) \leq 0$)}:  
    \begin{align}
        f_1^1(q_1, q_2) &= \frac{b_1 + L}{c} \left( 1 + \frac{q_2}{q_1} \right) + \frac{q_2}{c} - d_1, \label{if-affine-negative}
    \end{align}
    This function represents the deviation when the affine term in \eqref{2-flow-deviations} does not contribute, i.e., when $a(q_1, q_2) \leq 0$. Additionally, the feasibility condition associated with this case is governed by:
    \begin{align}
        h_1^1(q_1) &= \frac{q_1(cd_1 - b_1 - L)}{(b_1 + L) + q_1}, \label{h_11}
    \end{align}
    This function represents an upper bound on $q_2$ when $f^1(q_1, q_2) \leq 0$.

    \item \textbf{When linear term is active ($a(q_1, q_2) > 0$)}:  
    \begin{align}
        f_2^1(q_1, q_2) &= \frac{b_1 + L}{c} \left( 1 + \frac{q_2}{q_1} \right) + \frac{2q_2}{c} - d_1 + \nonumber\\& q_1\left( \frac{r_1 - c}{r_1 c} \right), \label{if-affine-positive}
    \end{align}
    This function represents the deviation when the affine term in \eqref{2-flow-deviations} contributes, i.e., when $a(q_1, q_2) > 0$. The associated feasibility condition is given by:
    \begin{align}
        h_2^1(q_1) &= \frac{q_1^2 (c - r_1) + q_1 r_1 (c d_1 - b_1 - L)}{r_1 (b_1 + L) + 2 r_1 q_1}, \label{h_21}
    \end{align}
    This function represents an upper bound on $q_2$ when the max term in $f^1$ is active.
\end{itemize}

The term $cd_1$ represents the total work completed by the server in the interval $(0,d_1)$, assuming a service rate of $c$. Meanwhile, $b_1+L$ represents the work that can arrive instantaneously for the $i^{th}$ flow. To ensure the delay target is met, the server must process more work than what has arrived. This requires $cd_1 \geq b_1+L$, meaning that the delay target is only achievable if the server has enough time to handle the initial worst-case burst. If this condition is not met, the burst cannot be fully processed within the required delay.
By analyzing these functions, we can systematically establish the convexity of the feasible set $\mathcal{D}^1 \cap \mathcal{D}^2$.





We are interested in the inequalities $f_1^1(q_1, q_2) \leq 0$ and $f_2^1(q_1, q_2) \leq 0$, which can be written as:

\begin{align}
    q_2 &\leq h_1^1(q_1), \\
    q_2 &\leq h_2^1(q_1).
\end{align}

The functions $h_1^1(\cdot)$ and $h_2^1(\cdot)$ are defined in \eqref{h_11}, \eqref{h_21}. These inequalities describe the feasible region for $q_1$ and $q_2$.



We rewrite the set $\mathcal{D}^1$ based on the sign of  $a(q_1, q_2)$. Since $a(q_1, q_2)$ is a linear, it partitions the space into two disjoint half spaces.
Let us denote the half-space where $a(q_1, q_2) \leq 0$ as the set $A$:
\begin{align}
    A = \left\{(q_1, q_2) \in \mathbb{R}^2_+ \mid a(q_1, q_2) \leq 0 \right\}.    
\end{align}

From the definition, we know that both $A$ and $A^c$ are convex sets. 

Let the sets $B_1$ and $B_2$ be defined as:
\begin{align}
    B_1 &= \left\{ f_1^1(q_1, q_2) \leq 0\right\} = \left\{ q_2 \leq h_1^1(q_1)\right\}, \\
    B_2 &= \left\{ f_2^1(q_1, q_2) \leq 0 \right\} = \left\{ q_2 \leq h_2^1(q_1) \right\}.
\end{align}

Therefore, the set $\mathcal{D}^1$ can be equivalently written as
\begin{align}
    \mathcal{D}^1 &= \left(A \cap B_1 \right) \cup \left(A^c \cap B_2 \right),
\end{align}

The sets $A$, $B_1$, and $B_2$ are all shown in the Figure~\ref{A,B}. Before proving that $\mathcal{D}^1$ is a convex set, we will establish the following properties:

\begin{figure}[htbp!]
    \centering
    \includegraphics[width=0.5\textwidth]{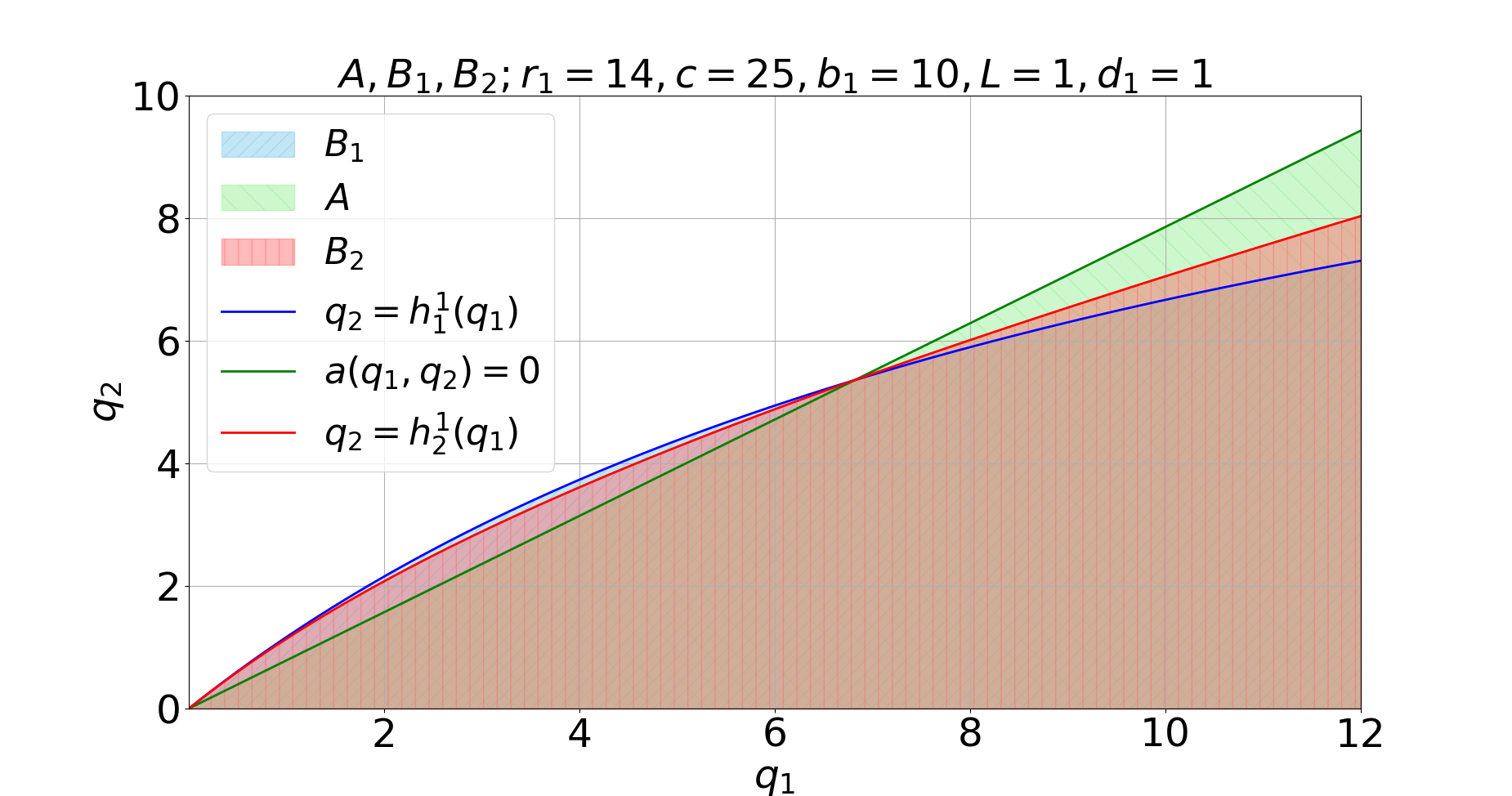}
    \caption{The shaded region in blue, red and green  shows the points that belong to the sets $B_1, B_2$ and $A$. The blue, red and green lines denote the equations $q_2 = h_1^1(q_1), q_2 = h_2^1(q_1), a(q_1, q_2) = 0$ respectively, with the parameter values given by $b_1 = 10, L = 1, C = 25, d_1 = 1$. The Brown region denotes the set $\{A \cup B_1\}$ and the small pink region denotes the set $\{A^c \cup B_2\}$.}
    \label{A,B}
\end{figure}


\begin{remark}\label{chainofimplications}
For $a(q_1, q_2) \geq 0$, the following holds:
$f_2^1(q_1, q_2) \geq f_1^1(q_1, q_2)$, $B_2 \subseteq B_1$, and $B_1 \cap B_2 = B_2$.\\
For $a(q_1, q_2) \leq 0$, the implications reverse:
$f_1^1(q_1, q_2) \geq f_2^1(q_1, q_2)$, $B_1 \subseteq B_2$, and $B_1 \cap B_2 = B_1$.
\end{remark}
We will use the observations noted in remark \ref{chainofimplications} while proving the convexity of the set $\mathcal{D}^1$

\begin{lemma}\label{lemma:notemptymeansconvex}
    The set $A^c \cap B_2$ is convex.
\end{lemma}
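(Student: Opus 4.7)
The plan is to reduce convexity of $A^c \cap B_2$ to concavity of the upper-boundary curve $q_2 = h_2^1(q_1)$. If $A^c \cap B_2 = \emptyset$ the claim is trivial, so I focus on the non-empty case, where the set is sandwiched between the affine line $q_2 = q_1(c-r_1)/r_1$ (the boundary of $A^c$) and the curve $q_2 = h_2^1(q_1)$ (the boundary of $B_2$); convexity then follows once $h_2^1$ is shown to be concave on the relevant domain.

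The first step is to rewrite $h_2^1$ in a transparent form via polynomial long division. Setting $u := b_1+L$ and dividing the numerator of \eqref{h_21} by its denominator yields
\[
h_2^1(q_1) \;=\; \frac{c-r_1}{2r_1}\, q_1 \;+\; M \;-\; \frac{M\, u}{2q_1 + u}, \qquad M := \frac{2 r_1 c d_1 - u(r_1+c)}{4 r_1}.
\]
The first two terms are affine in $q_1$, and since $q_1 \mapsto 1/(2q_1 + u)$ is convex on $q_1 > 0$, the remainder $-M u/(2q_1 + u)$ is concave iff $M \ge 0$. Hence $h_2^1$ is concave on $\mathbb{R}_{++}$ as soon as $M \ge 0$.

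The second step is to show that non-emptiness of $A^c \cap B_2$ already forces $M > 0$. For any $(q_1,q_2) \in A^c \cap B_2$, combining $q_2 > q_1(c-r_1)/r_1$ with $q_2 \le h_2^1(q_1)$ gives $h_2^1(q_1) > q_1(c-r_1)/r_1$; clearing denominators and isolating $d_1$ reduces this to $d_1 > u/r_1$ when $c \ge r_1$ and to $d_1 > u/c$ when $c < r_1$. A one-line comparison of either threshold with $u(r_1+c)/(2 r_1 c)$ — using the sign of $r_1 - c$ — shows that in every regime the required bound on $d_1$ is strictly stronger than the bound needed for $M \ge 0$, so in fact $M > 0$.

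Putting it together: $B_2$ is the hypograph of the concave function $h_2^1$ and is therefore convex, $A^c$ is an open half-plane and hence convex, and their intersection is convex. The main obstacle is the sign analysis in the second step — carefully tracing how non-emptiness of the intersection pins down the correct inequality on $d_1$ across the regimes $c > r_1$, $c = r_1$, $c < r_1$, and then verifying in each case that this inequality implies $M > 0$; the polynomial division in the first step is routine algebra, and the final assembly is standard convex-analysis bookkeeping.
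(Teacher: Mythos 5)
Your proof follows the same skeleton as the paper's: show that non-emptiness of $A^c \cap B_2$ forces $h_2^1$ to be concave, conclude that $B_2$ is a convex hypograph, and intersect with the convex half-plane $A^c$. What you add is genuinely worthwhile in two places. First, the long-division identity $h_2^1(q_1) = \tfrac{c-r_1}{2r_1}q_1 + M - \tfrac{Mu}{2q_1+u}$ (which checks out) actually \emph{derives} the concavity threshold $M \ge 0$, i.e.\ $\tfrac{2cr_1d_1}{r_1+c} \ge b_1+L$, which the paper only asserts. Second, the paper closes the argument with the chain $\tfrac{2cr_1d_1}{r_1+c} > d_1 r_1 > b_1+L$, whose first inequality is equivalent to $c > r_1$ — an assumption the paper never states — so your separate treatment of the regime $c < r_1$ fills a real hole.

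There is, however, one loose step in your regime analysis. When $c < r_1$, the inequality $h_2^1(q_1) > q_1(c-r_1)/r_1$ simplifies to $c(r_1 d_1 - u) > (c-r_1)q_1$, whose right-hand side is negative and unbounded below as $q_1$ grows; it therefore holds for some $q_1>0$ regardless of $d_1$ and cannot, by itself, "reduce to $d_1 > u/c$." The correct source of that bound is positivity of $q_2$: a point of $A^c \cap B_2$ has $0 < q_2 \le h_2^1(q_1)$, and $h_2^1(q_1) > 0$ for some $q_1>0$ forces $(c-r_1)q_1 + r_1(cd_1 - u) > 0$, which for $c < r_1$ requires $cd_1 > u$. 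With that substitution your comparison $\tfrac{u(r_1+c)}{2r_1c} = \tfrac{u}{2c}+\tfrac{u}{2r_1} < \tfrac{u}{c}$ goes through and $M>0$ follows, so the proof closes; but as written the derivation of $d_1 > u/c$ is attributed to an inequality that does not yield it.
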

\begin{proof}
See Appendix~\ref{app:notemptymeansconvex}
\end{proof}

The set $A \cap B_2$ is empty if $d_1r_1 \leq b_1 + L$, a condition utilized in the next theorem.


\begin{lemma}
The set $\mathcal{D}^1$ is convex.
\end{lemma}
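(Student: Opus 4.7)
The plan is to reduce $\mathcal{D}^1$ to a finite intersection of convex sets, leveraging Remark~\ref{chainofimplications} together with the emptiness observation preceding the lemma, and then to split into two cases depending on whether $r_1 d_1 \leq b_1 + L$ or $r_1 d_1 > b_1 + L$.

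First I would apply Remark~\ref{chainofimplications} to convert the defining union of $\mathcal{D}^1$ into an intersection. Since $A \cap B_1 \subseteq A \cap B_2$ and $A^c \cap B_2 \subseteq A^c \cap B_1$, we have $A \cap B_1 = A \cap B_1 \cap B_2$ and $A^c \cap B_2 = A^c \cap B_1 \cap B_2$, whence $\mathcal{D}^1 = (A \cup A^c) \cap (B_1 \cap B_2) = B_1 \cap B_2$. This identity is the critical step, because unions of convex sets are rarely convex but intersections always are.

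Next I would handle the two cases separately. In the case $r_1 d_1 \leq b_1 + L$, the emptiness observation (which I read as $A^c \cap B_2 = \emptyset$) forces $B_2 \subseteq A$, so $\mathcal{D}^1 = B_1 \cap B_2 = A \cap B_1$; here $A$ is a half-plane and $B_1 = \{q_2 \leq h_1^1(q_1)\}$ is convex because a direct calculation gives $h_1^1''(q_1) = -2(b_1+L)(cd_1-b_1-L)/\bigl((b_1+L)+q_1\bigr)^3 \leq 0$ under the natural feasibility requirement $cd_1 \geq b_1 + L$ already noted in the text (otherwise $B_1$ is empty and the claim is trivial). In the complementary case $r_1 d_1 > b_1 + L$, I would instead show that $B_2$ itself is convex by proving $h_2^1$ concave: differentiating twice, $h_2^1''(q_1)$ turns out to be proportional to $(b_1+L)(c+r_1) - 2 r_1 c d_1$, and then the stability assumption $c > r_1$ yields $2c(b_1+L) > (c+r_1)(b_1+L)$ while the case hypothesis gives $2 r_1 c d_1 > 2c(b_1+L)$, so the bracket is negative and $h_2^1'' \leq 0$. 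Combined with the convexity of $B_1$ (which holds because $cd_1 > r_1 d_1 > b_1+L$ in this case), $\mathcal{D}^1 = B_1 \cap B_2$ is an intersection of convex sets, hence convex.

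The main obstacle I anticipate is the concavity of $h_2^1$ in the second case: its second derivative has indefinite sign in general, and the argument relies on carefully chaining the case hypothesis $r_1 d_1 > b_1 + L$ with the stability assumption $c > r_1$ to pin down the sign; the reformulation $\mathcal{D}^1 = B_1 \cap B_2$ via the Remark is what allows this single sign computation to finish the proof instead of a more delicate union-of-convex-sets argument along the boundary $a=0$.
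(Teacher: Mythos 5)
Your proof is correct and follows essentially the same route as the paper: the same rewriting of the union as $B_1 \cap B_2$ via Remark~\ref{chainofimplications}, the same case split on $r_1 d_1$ versus $b_1+L$ (equivalently, emptiness of $A^c \cap B_2$), and convexity via concavity of $h_1^1$ and $h_2^1$. The only difference is that you verify the concavity claims by explicit second-derivative computations, whereas the paper asserts the concavity of $h_1^1$ and delegates that of $h_2^1$ to Lemma~\ref{lemma:notemptymeansconvex}, which arrives at the same condition $\frac{2cr_1d_1}{r_1+c} > b_1+L$ by the same chain of inequalities.
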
\label{D is convex}

\begin{proof}
\textbf{Case 1:} 

Let $A^c \cap B_2 = \emptyset$. In this case, $\mathcal{D}^1 = A \cap B_1$, where the set $A$ is a half-space and is convex. 
Since $h_1^1(q_1)$ is concave, the set $\{q_2 \leq h_1^1(q_1)\}$, denoted as $B_1$, is convex. The intersection of two convex sets, $A$ and $B_1$, implies that $\mathcal{D}^1$ is convex.

\textbf{Case 2:} 

When $A^c \cap B_2 \neq \emptyset$, we know that $(A^c \cap B_2)$ is a convex set and that $h_2^1(q_1)$ is a concave function. Therefore, using the remark \ref{chainofimplications} the set $\mathcal{D}^1$ can be written as:

\begin{align*}
\mathcal{D}^1 &= ( A \cap B_1) \cup ( A^c \cap B_2) \\
                &=(A\cap B_1 \cap B_2) \cup (A^c \cap B_1 \cap B_2)\\
                &= (B_1 \cap B_2)\\
              &=  (q_2 \leq H_1(q_1)),
\end{align*}
\begin{align*}
\text{where }
H_1(q_1) &= \min \left\{ h_1^1(q_1), h_2^1(q_1) \right\}.
\end{align*}
Since both $h_1^1(q_1)$ and $h_2^1(q_1)$ are concave functions, their pointwise minimum, $H_1(q_1)$, is also concave. Therefore, the set $\mathcal{D}^1 = \left\{ q_2 \leq H_1(q_1) \right\}$ is convex.
\end{proof}

Thus, we have shown that $\mathcal{D}^1$ is a convex set. By symmetry, $\mathcal{D}^2$ is also convex. Therefore, the intersection $\mathcal{D}^1 \cap \mathcal{D}^2$ is convex.
To unify both cases, we update the definition of $H_1(q_1)$ as follows:
\begin{align}\label{H_1}
H_1(q_1) &= 
\begin{cases}
h_1^1(q_1) & \text{if} \quad d_1 r_1 \leq b_1 + L, \\
\min \left\{ h_1^1(q_1), h_2^1(q_1) \right\} & \text{if} \quad d_1 r_1 > b_1 + L.
\end{cases}
\end{align}

We now state some properties of the function $H_i(q_i)$:

\begin{remark}
    The function $H_i(q_i): (0, \infty) \to (0, \infty)$, follow these properties mentioned below,
    \begin{itemize}
        \item $\lim_{q_i \to 0}H_i(q_i) = 0$
        \item $H_i(q_i)$ is a concave, strictly increasing, continuous function.
        \item $\left.\frac{d H_i(q_i)}{d q_i}\right|_{q_1 = 0} = \frac{cd_i}{b_i+L} - 1$
        \item $\left.\frac{d H_i(q_i)}{d q_i}\right|_{q_1 = \infty} = 0$
    \end{itemize}
\end{remark}


\subsection{Convexity of feasible set: $n$ flow case}
In this subsection, we aim to demonstrate that the constraint set for the $n$-flow problem, which ensures that the delay requirements of all flows are satisfied, is a convex set. Convexity of the constraint set is essential as it allows the $n$-flow optimization problem to be framed as a convex optimization problem. This reformulation enables the use of efficient algorithms to determine the optimal solution.

\begin{theorem}
    The constraint set for the $n$-flow problem, $\bigcap_{j=1}^n \mathcal{D}^j$, is a convex set.
\end{theorem}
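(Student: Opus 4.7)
The plan is to reduce the $n$-flow convexity claim to the two-flow analysis already established. Since any intersection of convex sets is convex, it suffices to prove that each $\mathcal{D}^i$ is convex for $i \in \{1,\dots,n\}$; the theorem then follows immediately by intersecting over $i$.

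The key observation is that $f^i(q)$ depends on $q$ only through the pair $(q_i, s)$, where $s := \sum_{j \ne i} q_j$. Writing $g^i(q_i,s)$ for this reduced function,
\begin{align*}
g^i(q_i, s) &= \frac{b_i+L}{c}\!\left(1 + \frac{s}{q_i}\right) - d_i + \frac{s}{c} + \frac{(n-2)L}{c} \\
&\quad + \left(q_i\cdot\frac{r_i-c}{r_i c} + \frac{s}{c}\right)^+,
\end{align*}
which has exactly the structural form of $f^1(q_1,q_2)$ from the two-flow case, the only differences being an additive constant $\frac{(n-2)L}{c}$ and the appearance of $(n-1)L$ in the place of $L$ inside the relevant threshold. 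Neither modification affects the qualitative argument, because the additive constant merely shifts the level set.

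I would then mirror the two-flow argument on the reduced variables $(q_i,s)$. Define the half-space $A^i = \{(q_i,s) : q_i\cdot\frac{r_i-c}{r_ic} + \frac{s}{c} \le 0\}$ together with $B_1^i = \{s \le h_1^i(q_i)\}$ and $B_2^i = \{s \le h_2^i(q_i)\}$, where $h_1^i$ and $h_2^i$ are obtained by solving $g^i(q_i,s) = 0$ in the inactive and active regimes of the $(\cdot)^+$ term respectively. A direct computation, identical in form to the derivations of \eqref{h_11} and \eqref{h_21} after the substitution $L \mapsto (n-1)L$ in the relevant numerator, shows that $h_1^i$ and $h_2^i$ are concave, strictly increasing, and continuous on $(0,\infty)$. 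Because Remark~\ref{chainofimplications} and Lemma~\ref{lemma:notemptymeansconvex} use only this structural form, they transfer verbatim, and the two-flow convexity argument applied to $(q_i,s)$ establishes that $\widetilde{\mathcal{D}}^i := \{(q_i,s) \in \mathbb{R}_+^2 : g^i(q_i,s) \le 0\}$ is convex in the reduced space.

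Finally, the map $\Phi_i : \mathbb{R}_+^n \to \mathbb{R}_+^2$ given by $\Phi_i(q) := (q_i, \sum_{j\ne i} q_j)$ is linear, so $\mathcal{D}^i = \Phi_i^{-1}(\widetilde{\mathcal{D}}^i)$ is the preimage of a convex set under a linear map, hence convex. Intersecting the $\mathcal{D}^i$ yields the theorem. The main obstacle I anticipate is bookkeeping rather than a new idea: one must verify that the sign analysis on $a(\cdot,\cdot)$, the concavity arithmetic for the modified $h_1^i$ and $h_2^i$, and the emptiness criterion for $(A^i)^c \cap B_2^i$ (now of the form $d_i r_i \le b_i + (n-1)L$) all survive the shift by $\frac{(n-2)L}{c}$, so that the two-flow case-split can be invoked unchanged.
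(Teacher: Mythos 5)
Your proposal is correct and follows essentially the same route as the paper: the paper likewise substitutes $k_n^i=\sum_{j\neq i}q_j$ (your $s$), absorbs the additive constant by setting $J_i=d_i-\frac{(n-2)L}{c}$, invokes the two-flow convexity result on the reduced two-dimensional set, and pulls back to $\mathbb{R}^n_+$ via the affine map before intersecting. The only caveat is your bookkeeping of the shifted emptiness threshold as $d_ir_i\le b_i+(n-1)L$; absorbing the constant into the delay target as $J_ir_i\le b_i+L$ is the cleaner (and correct) normalization, but this does not affect the argument.
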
 

\begin{proof}
The proof proceeds by reformulating the delay constraints for each flow using auxiliary variables, simplifying the problem to a two-dimensional convex constraint set. Leveraging this structure, we verify the convexity of the reformulated set. Finally, using the properties of affine transformations and the intersection of convex sets, we conclude that the original constraint set is convex.


The set $\mathcal{D}^i$ as defined in \eqref{2-flow-fesability-set} and the function $f^i(q)$ as defined in \eqref{deviation_n_flow}, we substitute $k_n^i = \sum_{j \neq i} q_j$ and $J_i = d_i - \frac{L(n-2)}{c}$ and reformulate the function $f^i(q_1, q_2, \dots, q_n)$ as:
\begin{align}\label{nto2delayequation}
    f^i_k(q_1, k_n^i) &= \frac{b_i + L}{c} \left(1 + \frac{k_n^i}{q_i}\right) 
    + \frac{k_n^i}{c} - J_i \\ 
    & + \left(q_i \left(\frac{r_i - c}{r_i c}\right) + \frac{k_n^i}{c}\right)^+. \nonumber
\end{align}
We define, $\mathcal{\hat{D}}^i = \left\{(q_1, k_n^i) \in \mathbb{R}_+^2 \mid f^i_k(q_1, k_n^i) \leq 0\right\}.$ From Theorem~\ref{D is convex}, we know that $\mathcal{\hat{D}}^i$ is a convex set. Using an affine transformation to return to the original variables, we conclude that $\mathcal{D}^i$ is also convex \cite{lay2007convex}. 

Since the intersection of convex sets is also convex, $\bigcap_{i=1}^n \mathcal{D}^i$ is a convex set.
\end{proof}

\section{Optimization problem}\label{OP2F}
In this section, we propose two algorithms to solve the two flow and the $n$ -flow  case optimization problem as given in $\eqref{n-flow-opti}$. The Two flow case optimization problem is given by, 
\begin{equation}\label{2-flow-opti}
\begin{aligned}
   \text{maximize} \quad & q_1 + q_2 \\
   \text{subject to} \quad 
        & (q_1, q_2) \in \mathcal{D}^1 \cap \mathcal{D}^2,
\end{aligned}
\end{equation}

\subsection{Optimization: 2-flow case}
We aim to maximize the amount of data served in one round and meet the delay requirements. Since $q_1$ represents the number of packets served per round for flow 1, maximizing the sum $q_1+q_2$ increases the amount of packets served in a single round. The optimization problem defined in \eqref{2-flow-opti} is a convex optimization problem.


\begin{lemma}\label{unique-exist-2-flow}[Existence and Uniqueness of Fixed Point]
The system of equations:
\begin{align}
H_1(q_1) &= q_2, \\
H_2(q_2) &= q_1.
\end{align}
have a unique solution $(q_1^*, q_2^*)$, where $q_1^* > 0$ and $q_2^* = H_1(q_1^*)$.
\end{lemma}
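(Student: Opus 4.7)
My plan is to reduce the coupled system to a one-variable fixed-point equation for the composition $G(x) := H_2(H_1(x))$, and then exploit the concavity and monotonicity of $G$ to obtain existence and uniqueness. Observe that any solution satisfies $q_1^* = H_2(H_1(q_1^*)) = G(q_1^*)$, and conversely, given a positive fixed point $q_1^*$ of $G$, the pair $(q_1^*,\,q_2^* := H_1(q_1^*))$ solves the system because $H_2(q_2^*) = G(q_1^*) = q_1^*$. So it is sufficient to show that $G$ admits a unique positive fixed point.

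From the properties of $H_i$ recorded just before this lemma, each $H_i$ is continuous, concave, strictly increasing, vanishes at $0$, and has slope $cd_i/(b_i+L)-1$ at the origin. Inspecting the closed forms \eqref{h_11}, \eqref{h_21} together with the piecewise definition \eqref{H_1}, $H_1$ is also bounded above: $h_1^1$ is bounded by $cd_1-b_1-L$, and in Case~2 we take the pointwise minimum with $h_1^1$. Hence $G$ inherits continuity, strict monotonicity, concavity (composition of the concave nondecreasing $H_2$ with the concave $H_1$), $G(0)=0$, with $G'(0^+)=\left(cd_1/(b_1+L)-1\right)\left(cd_2/(b_2+L)-1\right)$ and $\sup_{q_1>0} G(q_1)<\infty$.

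Set $\phi(q_1):=G(q_1)-q_1$. Under the standing nontrivial-feasibility assumption of the optimization problem, which forces $G'(0^+)>1$, we have $\phi'(0^+)>0$ and thus $\phi>0$ in some right neighborhood of the origin; combined with $\phi(q_1)\to-\infty$ (from the boundedness of $G$), the intermediate value theorem supplies at least one positive root $q_1^*$. Uniqueness is by concavity of $\phi$: if $\phi$ possessed two distinct positive zeros $q_a<q_b$, then together with $\phi(0)=0$ the three-zero argument for concave functions (expressing the middle zero as a convex combination of the other two and forcing the concavity inequality to be tight) would make $\phi\equiv 0$ on $[0,q_b]$, contradicting $\phi'(0^+)>0$.

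The main obstacle I anticipate is the boundedness of $G$ needed to push $\phi$ to $-\infty$, since $h_2^i$ grows linearly and only $h_1^i$ is intrinsically bounded; this has to be teased out of the piecewise definition \eqref{H_1}, showing that in both cases $H_1$ is ultimately capped by $h_1^1$. A secondary subtlety is justifying the strict inequality $G'(0^+)>1$ rather than merely $\geq 1$: this is what distinguishes the existence of a strictly positive fixed point from the trivial solution at the origin, and must be tied to the assumption (consistent with Lemma~\ref{n_flow_non_empty_lemma}) that $\mathcal{D}^1\cap\mathcal{D}^2$ has nonempty interior.
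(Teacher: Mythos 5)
Your proposal is correct and follows essentially the same route as the paper's proof: both reduce the system to the scalar fixed-point equation for the composition $T(q_1)=H_2(H_1(q_1))$, use $T(0)=0$, concavity, and a slope greater than $1$ at the origin (tied to the feasibility condition) together with the intermediate value theorem for existence, and concavity for uniqueness. Your minor variations --- establishing eventual sublinearity via the boundedness of $H_1$ rather than via $T'(q_1)\to 0$, and ruling out two positive zeros by the three-point concavity argument rather than the mean value theorem --- are equivalent in substance.
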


\begin{proof}
    See Appendix~\ref{ap:unique-exist-2-flow}
\end{proof}
A brief outline of the proof of Theorem~\ref{unique-exist-2-flow} is as follows. Define $T(q_1) = H_2(H_1(q_1))$, a concave increasing function \cite{Boyd_Vandenberghe_2004}. First, we analyze $T'(q_1)$ and find $T'(0) > 1$, ensuring $T(q_1)$ initially grows faster than $q_1$ but eventually slower as $T'(q_1)$ decreases. Second, we establish the existence of a fixed point $q_1^*$ such that $T(q_1^*) = q_1^*$, as $T(0) = 0$ and $T(q_1) < q_1$ for large $q_1$. Finally, the concavity of $T(q_1)$ ensures the fixed point is unique by ruling out the possibility of multiple fixed points. Moreover, the fixed point $q_1^*$ represents the equilibrium of the system described by $T(q_1)$.

\begin{lemma}\label{optimal-point}
    The optimal solution to the optimization problem described in \eqref{2-flow-opti} is given by the pair of values that satisfy the following system of equations:
    \begin{align}\label{set-equation-2flow}
        H_1(q_1) &= q_2, \\
        H_2(q_2) &= q_1. \nonumber
    \end{align}
\end{lemma}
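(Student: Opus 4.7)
The plan is to combine the characterization of the feasible set from Section~\ref{C2F}, namely $\mathcal{D}^1 \cap \mathcal{D}^2 = \{(q_1,q_2) : q_2 \leq H_1(q_1),\ q_1 \leq H_2(q_2)\}$, with the unique fixed point guaranteed by Lemma~\ref{unique-exist-2-flow}. Since the objective $q_1 + q_2$ is linear while the feasible set is convex, closed (by continuity of $H_1, H_2$), and bounded (the dominant branch $h_1^1$ has finite limit $cd_1 - b_1 - L$ as $q_1 \to \infty$, so $H_1$ and hence $q_2$ is bounded, and symmetrically $q_1$ is bounded), a maximizer exists. The remainder of the argument aims to force both inequality constraints to be active at any optimum, after which Lemma~\ref{unique-exist-2-flow} identifies that optimum uniquely as $(q_1^*, q_2^*)$.

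First I would rule out the case in which neither constraint binds at an optimum $(q_1^\circ,q_2^\circ)$: there, $(q_1^\circ+\delta,\,q_2^\circ+\delta)$ remains feasible for small $\delta>0$ by continuity of $H_1,H_2$, yet improves the objective by $2\delta$, contradicting optimality. Then I would rule out the case where exactly one constraint binds. Suppose, without loss of generality, that $q_2^\circ = H_1(q_1^\circ)$ but $q_1^\circ < H_2(q_2^\circ) = T(q_1^\circ)$, where $T := H_2 \circ H_1$ is the map analyzed in Lemma~\ref{unique-exist-2-flow}. I would slide along the top boundary to the candidate $(q_1^\circ + \delta,\,H_1(q_1^\circ + \delta))$. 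The strict gap $T(q_1^\circ) - q_1^\circ > 0$ together with continuity of $T$ keeps the point feasible for all sufficiently small $\delta>0$, while the objective increases by $\delta + [H_1(q_1^\circ+\delta) - H_1(q_1^\circ)] > 0$ because $H_1$ is strictly increasing. The mirror case with roles of flows $1$ and $2$ swapped is identical.

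Consequently any maximizer must saturate both inequalities: $q_2 = H_1(q_1)$ and $q_1 = H_2(q_2)$ simultaneously. Lemma~\ref{unique-exist-2-flow} supplies exactly one such pair, which must therefore be the optimizer. The step I expect to require the most care is the feasibility-preserving slide along a single-tight boundary: it relies on the strict inequality $T(q_1) > q_1$ holding on the open interval $(0,q_1^*)$. This is not written as a standalone statement in Lemma~\ref{unique-exist-2-flow}, but follows immediately from its hypotheses ($T(0)=0$, $T'(0)>1$, $T$ concave, and $T(q_1^*)=q_1^*$), so the write-up should isolate this strict-inequality fact as a small corollary before invoking it in the sliding argument.
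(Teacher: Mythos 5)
Your proposal is correct, but it reaches the conclusion by a genuinely different route from the paper. The paper's proof in Appendix~\ref{app:optimal-point} is a direct comparison by contradiction: it supposes a feasible $\hat q$ with $\hat q_1+\hat q_2>q_1^*+q_2^*$ and, using concavity of $H_1,H_2$ together with $H_i(0)=0$ (so that $H_i'(q_i^*)\le H_i(q_i^*)/q_i^*$), derives the incompatible ratio bounds $\hat q_2/\hat q_1\le q_2^*/q_1^*$ and $\hat q_2/\hat q_1>q_2^*/q_1^*$, with a separate monotonicity argument for the case $\hat q_2\le q_2^*$. You instead establish existence of a maximizer by compactness, show that both constraints $q_2\le H_1(q_1)$ and $q_1\le H_2(q_2)$ must be active at any maximizer via a local sliding perturbation, and then invoke the uniqueness of the fixed point from Lemma~\ref{unique-exist-2-flow}. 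Your activity argument needs only continuity and strict monotonicity of $H_1,H_2$ (concavity enters solely through the uniqueness lemma), and it cleanly separates existence, activity, and identification; it also sidesteps the somewhat delicate justification the paper needs for asserting $\hat q_1<H_2(\hat q_2)$. What the paper's route buys in exchange is that it never needs a maximizer to exist a priori, since it compares $q^*$ against an arbitrary feasible point. Two small points to tighten in your write-up: the strict gap $T(q_1^\circ)>q_1^\circ$ you worry about is simply the case hypothesis that the second constraint is slack, so no separate corollary is needed; and since $\mathcal{D}^1\cap\mathcal{D}^2$ is defined with $q>0$ it is not closed, but its closure adds only the origin (because $H_i(0)=0$), where the objective is $0<q_1^*+q_2^*$, so the supremum is indeed attained in the set.
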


\begin{proof}
    See Appendix~\ref{app:optimal-point}
\end{proof}

In this section, we have formulated a convex optimization problem aimed at minimizing the number of switchings between two flows by maximizing the sum of the quanta $q_1 + q_2$. The existence and uniqueness of the optimal solution were established, with the solution satisfying the system of equations $H_1(q_1) = q_2$ and $H_2(q_2) = q_1$. Through analysis, we showed that the optimal solution exists and is unique, providing a set of conditions under which this solution can be determined.

\begin{theorem}\label{fixed-point-iteration}
    Consider the fixed-point iteration:
    \begin{align}
        x_{k+1} = T(x_k),
    \end{align}
    where the initial value \( x_0 > 0 \) is given. Then, the sequence \( \{x_k\} \) generated by this iteration, converges to a unique fixed point \( q_1^* > 0 \), i.e.,
    \begin{align}
        \lim_{k \to \infty} x_k = q_1^*.
    \end{align}
    Furthermore, the fixed point \( q_1^* \) satisfies the equation:
    \begin{align}
        T(q_1^*) = q_1^*.
    \end{align}
\end{theorem}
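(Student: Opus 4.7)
The plan is to exploit the structural properties of $T = H_2 \circ H_1$ already established in the paper: $T$ is continuous, strictly increasing, concave, with $T(0)=0$, $T'(0)>1$, $T'(q_1)\to 0$ as $q_1\to\infty$, and a unique positive fixed point $q_1^*$ (Lemma~\ref{unique-exist-2-flow}). The argument splits into the three cases $x_0 = q_1^*$ (trivial), $x_0 \in (0, q_1^*)$, and $x_0 > q_1^*$, and in the nontrivial cases shows that $\{x_k\}$ is a bounded monotone sequence whose limit must equal $q_1^*$ by continuity and uniqueness.

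The key preparatory step is to analyze the sign of $g(q_1) := T(q_1) - q_1$. Since $T$ is concave and $-q_1$ is linear, $g$ is concave; moreover $g(0)=0$, $g(q_1^*)=0$, and $g'(0) = T'(0)-1 > 0$. Any concave function with two zeros and positive slope at the first zero must be strictly positive on the open interval between them, so $g > 0$ on $(0, q_1^*)$. For $x > q_1^*$, concavity forces $g'(q_1^*) \le 0$, and because $T'(q_1)\to 0$ gives $g'(q_1) \to -1$, we get $g < 0$ strictly on $(q_1^*, \infty)$. Hence $T(q_1) > q_1$ on $(0, q_1^*)$ and $T(q_1) < q_1$ on $(q_1^*, \infty)$. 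Next, monotonicity of $T$ propagates order across iterates: if $x_0 \in (0, q_1^*)$ then $x_1 = T(x_0) > x_0$ by the sign of $g$, and $x_1 = T(x_0) < T(q_1^*) = q_1^*$ by monotonicity, so inductively $x_k < x_{k+1} < q_1^*$. The symmetric argument for $x_0 > q_1^*$ gives a decreasing sequence bounded below by $q_1^*$.

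The monotone convergence theorem then yields a limit $L := \lim_{k\to\infty} x_k$, which by continuity of $T$ satisfies $L = T(L)$. The only strictly positive fixed point of $T$ is $q_1^*$ (Lemma~\ref{unique-exist-2-flow}), and the bounds $x_k \ge x_0 > 0$ (increasing case) or $x_k \ge q_1^*$ (decreasing case) preclude $L = 0$, forcing $L = q_1^*$.

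The main obstacle is the strict inequality $T(q_1) < q_1$ for all $q_1 > q_1^*$: concavity of $g$ alone only delivers $g \le 0$ beyond the second zero, so one genuinely needs the decay condition $T'(q_1)\to 0$ to rule out $g \equiv 0$ on a tail and secure strict monotonicity of the iterates. Everything else is a standard monotone-iteration argument, and no calculation of $T$'s explicit form is required.
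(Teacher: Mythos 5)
Your proof is correct and takes essentially the same route as the paper's: both arguments show that the iterates form a monotone sequence bounded by $q_1^*$ (increasing from below, decreasing from above) and conclude convergence to the unique positive fixed point. Your preparatory sign analysis of $g(q_1)=T(q_1)-q_1$, which establishes $T(x)>x$ on $(0,q_1^*)$ and $T(x)<x$ on $(q_1^*,\infty)$, makes rigorous a step the paper merely asserts, and identifying the limit as a fixed point via continuity and the monotone convergence theorem is a cleaner closing move than the paper's supremum-contradiction argument.
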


\begin{proof}
See Appendix~\ref{app:fixed-point-iteration}
\end{proof}

\begin{algorithm}
\caption{Fixed-Point Iteration for Finding $q^*$ for two-flow case}
\begin{algorithmic}[1]
    \State \textbf{Input:} Initial guess $x_0 > 0$, tolerance $\epsilon > 0$, $T(x)$
    \State \textbf{Output:} Fixed point $q^*$
    \State \textbf{Initialize:} Set $x_0$
    \State Set $k \gets 1$
    \State Set $x_1 \gets T(x_{0})$
    \While{$|x_k - x_{k-1}| > \epsilon$}
        \State Compute $x_{k+1} \gets T(x_k)$
        \State Update $k \gets k + 1$
    \EndWhile
    \State \textbf{Return} $q_1^* = x_k$; $q_2^* = H_1(q_1^*)$
\end{algorithmic}\label{algo-two-flow-case}
\end{algorithm}

In Figure~\ref{two_flow_fixed_point}, the trajectory of $q_1$, as determined by the  Algorithm~\ref{algo-two-flow-case}, is shown to converge to the optimal value $q_1^*$. The specified parameters for this case are $b=[10,15], r=[1,2], d=[1,0.5], c=50,$ and $L=1$. Starting with an initial guess of $x_0=1$, the algorithm successfully converges to the optimal solution, $(q_1^*, q_2^*) = (3.181, 8.749)$.

\begin{figure}[htbp!]
    \centering
    \includegraphics[width=0.5\textwidth]{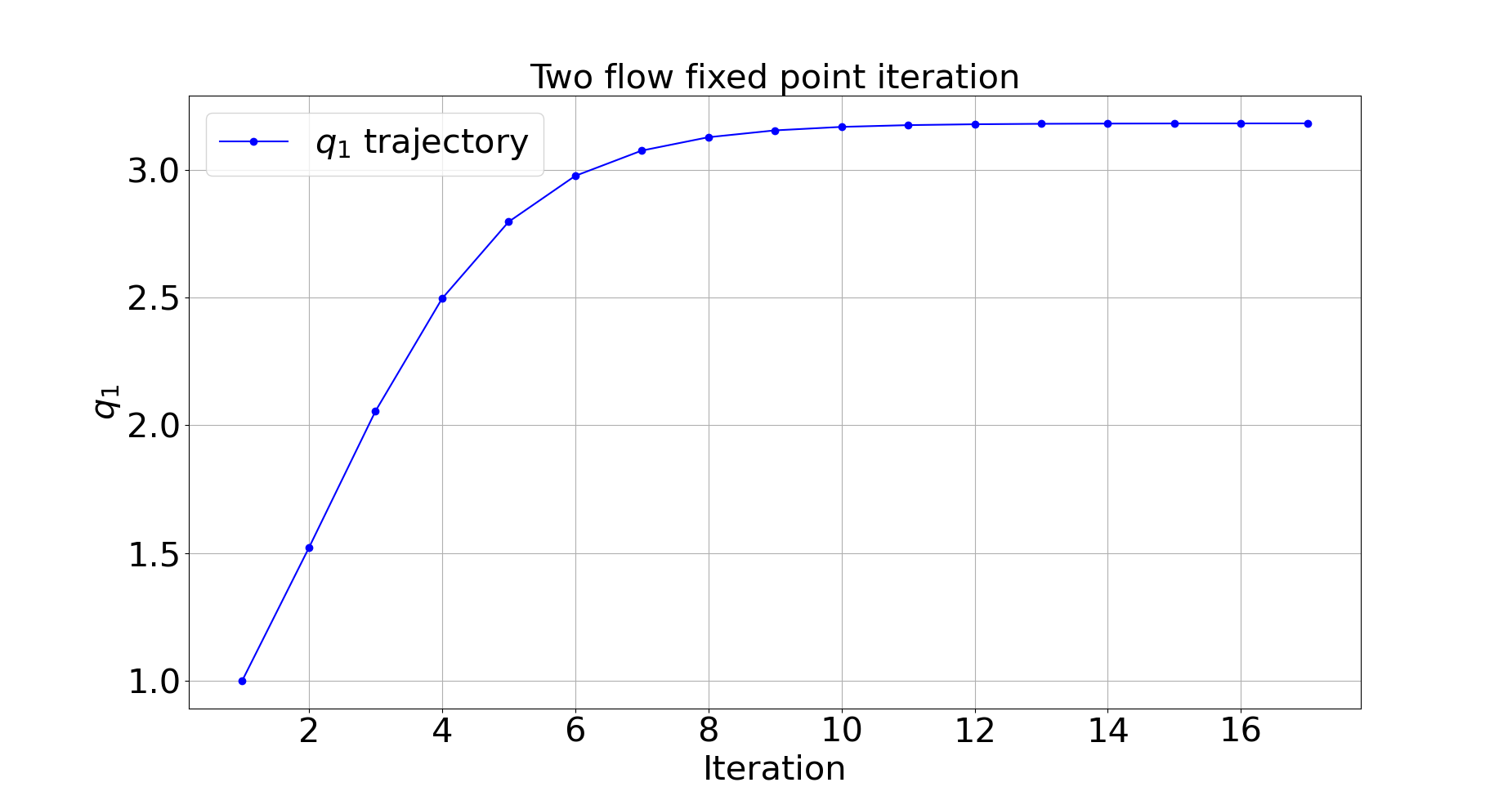}
    \caption{Trajectory of $q_1$ under Algorithm~\ref{algo-two-flow-case}, illustrating convergence to the optimal value $q_1^* = 3.181$. The parameters for this case are $b = [10, 15]$, $r = [1, 2]$, $d = [1, 0.5]$, $c = 50$, and $L = 1$, with an initial guess of $x_0 = 1$.}

    \label{two_flow_fixed_point}
\end{figure}

\subsection{Optimization: n-flow case}
We now extend the problem to the case of minimizing the number of switchings in the $n$-flow scenario, where the goal is to maximize the total quanta while satisfying the flow constraints for each of the $n$ flows. The optimization problem is defined as follows:

\begin{equation}
\begin{aligned}
   \text{maximize} \quad & \sum_{i=1}^{n} q_i \\
   \text{subject to} \quad 
        & (q_1, q_2, \dots, q_n) \in \bigcap_{j=1}^{n} \mathcal{D}^j,
\end{aligned}
\end{equation}

where $q_1, q_2, \dots, q_n$ represent the quanta for each flow, and the feasible region is given by the intersection of the constraint sets $\mathcal{D}^j$ for each $j = 1, 2, \dots, n$. These constraint sets represent the feasible values for the quanta of each individual flow, ensuring the system operates within the required delay requirements, such as delay and throughput constraints for each flow. The objective of this problem is to maximize the sum of all quanta, which corresponds to minimizing the number of switching between the flows.


Since this is a convex optimization problem, it can be solved using any standard convex optimization solver. These solvers iteratively explore the feasible space, adjusting the quanta to maximize the objective function. The convexity of the problem guarantees that any local optimum is also a global optimum, ensuring that the solution found by such solvers is optimal for the given system.

We propose a fixed-point iterative method to find the optimal solution, as described below.

From \eqref{nto2delayequation}, the constraint set can be expressed as \begin{align} \sum_{j \neq i} q_j \leq H_i(q_i). \end{align} Since the function $H_i(q_i)$ is strictly increasing and concave, we define
\begin{align} g_i(q_i) = H_i(q_i) + q_i. \end{align} Thus, the constraint can be rewritten as
\begin{align} \sum_{j} q_j \leq g_i(q_i). \end{align} Since $g_i(q_i)$ is increasing and concave, let us assume $\sum_i q_i = \theta$. This leads to
\begin{align} \sum_i g_i^{-1}(\theta) \leq \theta. \end{align}

\begin{lemma}\label{fixed_point_n}
    There exists an unique $\theta^* > 0$, such that
    \begin{align}
        \sum_i g_i^{-1}(\theta^*) = \theta^*
    \end{align}
\end{lemma}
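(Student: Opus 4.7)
The plan is to reduce Lemma~\ref{fixed_point_n} to finding a unique positive zero of the scalar function $\phi(\theta) = F(\theta) - \theta$, where $F(\theta) = \sum_{i=1}^n g_i^{-1}(\theta)$. First, I would verify that each $g_i(q) = H_i(q) + q$ inherits from the remark following Lemma~\ref{D is convex} the properties of being strictly increasing, concave, and continuous on $[0,\infty)$, with $g_i(0)=0$ and $g_i(q)\to\infty$. Consequently $g_i^{-1}$ is well-defined on $[0,\infty)$, strictly increasing, continuous, convex (the inverse of an increasing concave function is convex), and vanishes at $0$. Summing these properties over $i$ shows that $F$, and hence $\phi$, is convex and continuous with $\phi(0)=0$.

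Second, I would pin down the asymptotic behavior of $\phi$. Using $g_i'(q) = H_i'(q)+1$ together with the remark, one has $g_i'(0) = cd_i/(b_i+L)$ and $g_i'(q)\to 1$ as $q\to\infty$; hence $(g_i^{-1})'(\theta)\to 1$ and $F'(\theta) \to n$. For $n\ge 2$ this forces $\phi'(\theta)\to n-1>0$, so $\phi(\theta)\to +\infty$. For the lower end, I would invoke feasibility via Lemma~\ref{n_flow_non_empty_lemma}: for any strictly positive $q \in \bigcap_i \mathcal{D}^i$, the constraint $\sum_{j \neq i} q_j \le H_i(q_i)$ rewrites as $\sum_j q_j \le g_i(q_i)$; setting $\theta_0 = \sum_j q_j$ gives $q_i \ge g_i^{-1}(\theta_0)$ for every $i$, and summing yields $\phi(\theta_0) \le 0$ at the strictly positive $\theta_0$. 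A strict-feasibility hypothesis (equivalently $\sum_i (b_i+L)/(cd_i) < 1$, which forces $\phi'(0) = F'(0)-1 < 0$) upgrades this to a strict inequality.

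Finally I would harvest the conclusion. Continuity of $\phi$ together with $\phi(\theta_0) < 0$ and $\phi(\theta)\to +\infty$ produces a positive zero $\theta^* > \theta_0$ by the intermediate value theorem. Uniqueness follows directly from convexity: the sublevel set $\{\theta \ge 0 : \phi(\theta) \le 0\}$ is a convex subset of $[0,\infty)$ containing $0$ and bounded above (because $\phi \to +\infty$), hence an interval $[0,\theta^*]$ on whose right endpoint $\phi$ must vanish; any positive zero of $\phi$ is forced to coincide with $\theta^*$. The main obstacle is the boundary bookkeeping around $\theta = 0$, namely ruling out the degenerate case in which $\phi \ge 0$ throughout $(0,\infty)$ so that the only fixed point is the trivial one; the strict form of the feasibility condition is precisely what excludes this. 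Once $\phi$ is known to be strictly negative somewhere, convexity does all the remaining work.
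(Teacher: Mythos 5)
Your proof is correct and, at its core, takes the same route as the paper: both arguments reduce the lemma to the scalar convex function $\Gamma(\theta)-\theta$ (your $\phi$), obtain existence of a positive root from the intermediate value theorem, and obtain uniqueness from convexity. The one substantive difference is how the function is shown to be non-positive somewhere on $(0,\infty)$: the paper argues through the derivative at the origin, $\lim_{\theta\to 0}\Gamma'(\theta)=\sum_i\frac{b_i+L}{cd_i}\le 1$, invoking Lemma~\ref{n_flow_non_empty_lemma}, whereas you evaluate $\phi$ directly at $\theta_0=\sum_j q_j$ for an explicit feasible $q$ and conclude $\phi(\theta_0)\le 0$. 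Your variant is a little more robust, and you correctly flag a point the paper glosses over: since $\Gamma$ is strictly convex with $\Gamma(0)=0$, the borderline case $\Gamma'(0)=1$ yields $\Gamma(\theta)>\theta$ for all $\theta>0$ and hence \emph{no} positive fixed point, so the strict form of the feasibility condition (equivalently $\sum_i\frac{b_i+L}{cd_i}<1$, or strict feasibility of some $q$) is genuinely required; the paper's non-strict inequality does not suffice as written. One step of yours needs a touch more care: in the uniqueness argument, knowing that every zero of $\phi$ lies in the sublevel interval $[0,\theta^*]$ does not by itself exclude a zero in the interior. To finish, note that a zero $\theta_1\in(0,\theta^*)$ together with $\phi(0)=0$ and convexity forces $\phi(\theta)\ge\frac{\theta}{\theta_1}\,\phi(\theta_1)=0$ for all $\theta>\theta_1$, which contradicts $\phi(\theta_0)<0$ when $\theta_0>\theta_1$ and forces $\phi(\theta^*)>0$ when $\theta_0<\theta_1$; alternatively, strict convexity of $\Gamma$ limits $\phi$ to at most two zeros, one of which is already $\theta=0$. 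With that detail filled in, your argument is complete and slightly tighter than the paper's own (whose uniqueness paragraph mistakenly describes $\Gamma$ as concave, carrying over text from the two-flow case).
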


\begin{proof}
    Let, $\Gamma(\theta) = \sum_i g_i^{-1}(\theta)$. We know that $\Gamma(\theta)$ is an increasing and convex function. Also, we know that
    \begin{align}\label{gamma_limit_0}
        \lim_{\theta \to 0}\Gamma^{'}(\theta) &= \frac{1}{c} \sum_j \frac{b_j + L}{d_j} \leq 1\\
        \lim_{\theta \to \infty}\Gamma^{'}(\theta) &= n > 1
    \end{align}
The inequality in \eqref{gamma_limit_0} is from the non-empty condition given in Lemma~\ref{n_flow_non_empty_lemma}. By intermediate value theorem, there exists at least one $\theta > 0$, such that $\Gamma(\theta) = \theta$. 
Let there exist two distinct fixed points $\theta_1$ and $\theta_2$ such that $\Gamma(\theta_1) = \theta_1$ and $T_2(\theta_2) = \theta_2$,
WLOG we assume $\theta_1 < \theta_2$. $T(\theta)$ is strictly increasing $T(\theta_1) < T(\theta_2)$. The line connecting $(T(\theta_1), \theta_1)$ and $(T(\theta_2), \theta_2)$ is $1$. By, mean value theorem there exist $c \in (\theta_1, \theta_2)$ such that $T^{'}(c) = 1$ \cite{rockafellar1970convex}. But, $T(\theta)$ is concave, $T(\theta)^{'}$ is increasing and for any $\theta < c$ the slope is less than $1$, and it cannot intersect the line y = x at $\theta_1$. This contradicts the assumption that $\theta_2$ is a fixed point. 
\end{proof}

\begin{lemma}\label{optimal-point}
    The optimal solution to the optimization problem in \eqref{n-flow-opti} is given by \( q^* \), where each component satisfies
    \[
    q_i^* = g_i^{-1}(\theta^*).
    \]
    The parameter \( \theta^* \) is uniquely determined by the fixed-point equation:
    \begin{align}\label{set-equation-2flow}
        \Gamma(\theta^*) = \theta^*.
    \end{align}
\end{lemma}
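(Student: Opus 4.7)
The plan is to reformulate feasibility using the single scalar $\theta = \sum_i q_i$, reduce the optimization to finding the largest $\theta$ for which $\Gamma(\theta) \le \theta$, and then invoke Lemma~\ref{fixed_point_n} to identify it.

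First I would rewrite each constraint $\sum_{j \ne i} q_j \le H_i(q_i)$ as $\sum_j q_j \le H_i(q_i) + q_i = g_i(q_i)$. Setting $\theta = \sum_j q_j$ and using that $g_i$ is strictly increasing and concave (it is the sum of a strictly increasing concave $H_i$ and the identity), this is equivalent to $q_i \ge g_i^{-1}(\theta)$. Summing these $n$ inequalities gives
\begin{align*}
\theta \;=\; \sum_i q_i \;\ge\; \sum_i g_i^{-1}(\theta) \;=\; \Gamma(\theta),
\end{align*}
so every feasible $q$ satisfies $\Gamma(\theta) \le \theta$, which (as shown next) forces $\theta \le \theta^*$.

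Next I would pin down the maximal admissible $\theta$. Define $\psi(\theta) := \Gamma(\theta) - \theta$. Since $\Gamma$ is convex with $\Gamma(0) = 0$, $\Gamma'(0) = \frac{1}{c}\sum_j \frac{b_j+L}{d_j} \le 1$ by the necessary condition of Lemma~\ref{n_flow_non_empty_lemma}, and $\Gamma'(\infty) = n > 1$, the function $\psi$ is convex with $\psi(0) = 0$, $\psi'(0) \le 0$, and $\psi(\theta) > 0$ for all sufficiently large $\theta$. Convexity then forces $\{\theta \ge 0 : \psi(\theta) \le 0\} = [0, \theta^*]$, where $\theta^*$ is the unique positive fixed point of $\Gamma$ given by Lemma~\ref{fixed_point_n}. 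Hence $\sum_i q_i \le \theta^*$ for every feasible $q$.

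Finally, I would verify that $q_i^* := g_i^{-1}(\theta^*)$ achieves this bound and is the unique optimizer. By construction $\sum_i q_i^* = \Gamma(\theta^*) = \theta^*$, and for each $i$ the constraint reads $\sum_j q_j^* = \theta^* \le g_i(q_i^*) = \theta^*$ with equality, so $q^* \in \bigcap_j \mathcal{D}^j$ and attains the upper bound $\theta^*$. For uniqueness, any other optimizer $q$ satisfies $\sum_i q_i = \theta^*$ together with $q_i \ge g_i^{-1}(\theta^*)$ for all $i$; since these componentwise lower bounds already sum to $\theta^*$, equality must hold in every coordinate, i.e.\ $q = q^*$. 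The main subtlety I expect lies in the shape argument for $\{\psi \le 0\}$: it relies on convexity of $\psi$ together with both limiting slopes (not merely existence of a fixed point) to rule out other components of the sublevel set.
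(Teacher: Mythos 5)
Your proposal is correct and follows essentially the same route as the paper: reduce feasibility to the scalar condition $\Gamma(\theta)\le\theta$, invoke Lemma~\ref{fixed_point_n} for the unique fixed point, and use convexity of $\Gamma$ (your sublevel-set argument for $\psi=\Gamma-\mathrm{id}$ is the same fact the paper expresses via the first-order condition $\Gamma(\theta)\ge\Gamma(\theta^*)+\Gamma'(\theta^*)(\theta-\theta^*)\ge\theta$ for $\theta>\theta^*$) to rule out larger $\theta$. Your explicit verification that $q^*$ is feasible and that the optimizer is unique in every coordinate is a small but welcome addition that the paper leaves implicit.
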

\begin{proof}
    The feasibility condition for the optimization problem \eqref{n-flow-opti} is:
    \begin{align}
        \sum_i g_i^{-1}(\theta) &\leq \theta,\\
        \Gamma(\theta ) &\leq \theta.
    \end{align}
    Since \( \Gamma(\theta) \) is an increasing function, Theorem~\ref{fixed_point_n} guarantees the existence of a unique \( \theta^* > 0 \) satisfying \( \Gamma(\theta^*) = \theta^* \). 

    Furthermore, \( \Gamma(\theta) \) is convex, and at \( \theta^* \), we have \( \Gamma'(\theta^*) \geq 1 \). By the first-order convexity condition, for any \( \theta > \theta^* \),
    \[
    \Gamma(\theta) \geq \Gamma(\theta^*) + \Gamma'(\theta^*) (\theta - \theta^*) \geq \theta.
    \]
    This implies that values \( \theta > \theta^* \) are not feasible. Similarly, for \( \theta < \theta^* \), by definition, we obtain
    \[
    \sum_j q_j < \sum_j q_j^*.
    \]
    Hence, the optimal solution to the problem in \eqref{n-flow-opti} is given by \( q_i^* = g_i^{-1}(\theta^*) \).
\end{proof}

\begin{theorem}
    Consider the fixed-point iteration:
    \begin{align}
        \theta_{k+1} = \Gamma^{-1}(\theta_k)
    \end{align}
    where the initial value $\theta_0 > 0$ is given. Then, the sequence $\{\theta_k\}$ generated by this iteration converges to a unique fixed point $\theta^*$, i.e.,
    \begin{align}
        \lim_{k \to \infty} \theta_k = \theta^*
    \end{align}
    Furthermore, the fixed point $\theta^*$ satisfies the equation 
    \begin{align}
        \Gamma(q_1^*) = q_1^*
    \end{align}
\end{theorem}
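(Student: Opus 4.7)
The plan is to piggyback on the structural properties of $\Gamma$ already established in Lemma~\ref{fixed_point_n} and mimic the monotone-convergence argument that appeared in the two-flow fixed-point theorem. Since $\Gamma$ is continuous, strictly increasing, and convex on $(0,\infty)$ with $\Gamma(0)=0$ (as $g_i(0)=H_i(0)+0=0$, giving $g_i^{-1}(0)=0$ for each $i$), its inverse $\Gamma^{-1}$ is well-defined, continuous, strictly increasing, and concave on $[0,\infty)$, and the fixed points of $\Gamma$ and $\Gamma^{-1}$ coincide. In particular, $\theta^{*}$ from Lemma~\ref{fixed_point_n} is the unique positive fixed point of $\Gamma^{-1}$.

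The key monotonicity fact I would first extract is that, because $\Gamma$ is convex with $\Gamma(0)=0$, the ratio $\Gamma(\theta)/\theta$ is non-decreasing in $\theta$. Combined with $\Gamma(\theta^{*})=\theta^{*}$ and the strict inequality $\Gamma'(\theta^{*})>1$ obtained from the intermediate-value construction in Lemma~\ref{fixed_point_n}, this yields
\begin{align*}
\Gamma(\theta) < \theta \quad \text{for } 0<\theta<\theta^{*}, \qquad \Gamma(\theta) > \theta \quad \text{for } \theta>\theta^{*}.
\end{align*}
Applying the increasing map $\Gamma^{-1}$ to these inequalities, one obtains $\Gamma^{-1}(\theta)>\theta$ on $(0,\theta^{*})$ and $\Gamma^{-1}(\theta)<\theta$ on $(\theta^{*},\infty)$, while $\Gamma^{-1}$ maps each of these intervals into itself (since $\Gamma^{-1}$ is increasing and fixes $\theta^{*}$).

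With this in hand the convergence argument splits into two symmetric cases. If $\theta_{0}<\theta^{*}$, then $\{\theta_{k}\}$ is strictly increasing and bounded above by $\theta^{*}$; if $\theta_{0}>\theta^{*}$, then it is strictly decreasing and bounded below by $\theta^{*}$; if $\theta_{0}=\theta^{*}$ the sequence is trivially constant. In either non-trivial case, the monotone convergence theorem gives a limit $\bar{\theta}\in[0,\infty)$. Passing to the limit in $\theta_{k+1}=\Gamma^{-1}(\theta_{k})$ and using continuity of $\Gamma^{-1}$ forces $\Gamma^{-1}(\bar{\theta})=\bar{\theta}$, and uniqueness of the positive fixed point (Lemma~\ref{fixed_point_n}) identifies $\bar{\theta}=\theta^{*}$.

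The main obstacle I anticipate is the boundary behavior near $\theta=0$: I must rule out that the iteration in the case $\theta_{0}<\theta^{*}$ gets stuck at the degenerate fixed point $\theta=0$ of $\Gamma^{-1}$. This is exactly why the strict inequality $\Gamma(\theta)<\theta$ on $(0,\theta^{*})$ (equivalently $\Gamma^{-1}(\theta)>\theta$) is essential, and why the convexity-plus-$\Gamma(0)=0$ step must be done carefully rather than relying only on $\Gamma'(0)\leq 1$ (which would allow tangency at the origin). Once that strict separation is established, the rest is a routine monotone-convergence wrap-up, entirely parallel to the proof of Theorem~\ref{fixed-point-iteration}.
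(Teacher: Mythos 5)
Your proposal is correct and follows essentially the same route as the paper, which simply observes that $\Gamma^{-1}$ is increasing and concave and then invokes the monotone-convergence argument of Theorem~\ref{fixed-point-iteration} verbatim. Your added care about strict separation from the degenerate fixed point at the origin (since the paper's Lemma~\ref{fixed_point_n} only gives $\Gamma'(0^+)\leq 1$, which nominally permits tangency) is a genuine tightening of a detail the paper glosses over, but it does not change the overall approach.
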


\begin{proof}
   We know that $\Gamma(\theta)$ is an increasing and convex function. Therefore, $\Gamma^{-1}(\theta)$ is an increasing and concave function. The properties  are similar to the function $T(x)$, hence the proof is same as theorem~\ref{fixed-point-iteration}. 
\end{proof}
These algorithms can be viewed as a fixed-point iteration connected to the nonlinear Perron-Frobenius theory, establishing links to well-established mathematical frameworks presented in \cite{Zheng-etal-pft-16, Zheng-etal-max-min-2018, Gao-opti-delay-2016}

\begin{algorithm}
\caption{Fixed-Point Iteration for Finding $q^*$ for $n$-flow case}
\begin{algorithmic}[1]
    \State \textbf{Input:} Initial guess $\theta_0 > 0$, tolerance $\epsilon > 0$, $g_i^{-1}(\theta), \Gamma^{-1}(\theta)$
    \State \textbf{Output:} Fixed point $q^*$
    \State \textbf{Initialize:} Set $\theta_0$
    \State Set $k \gets 1$
    \State Set $\theta_1 \gets \Gamma^{-1}(\theta_{0})$
    \While{$|x_k - x_{k-1}| > \epsilon$}
        \State Compute $\theta_{k+1} \gets \Gamma^{-1}(\theta_k)$
        \State Update $k \gets k + 1$
    \EndWhile
    \State \textbf{Return} $q_i^* = g_i^{-1}(\theta_k)$
\end{algorithmic}\label{algo-n-flow-case}
\end{algorithm}

\begin{figure}[htbp!]
    \centering
    \includegraphics[width=0.54\textwidth]{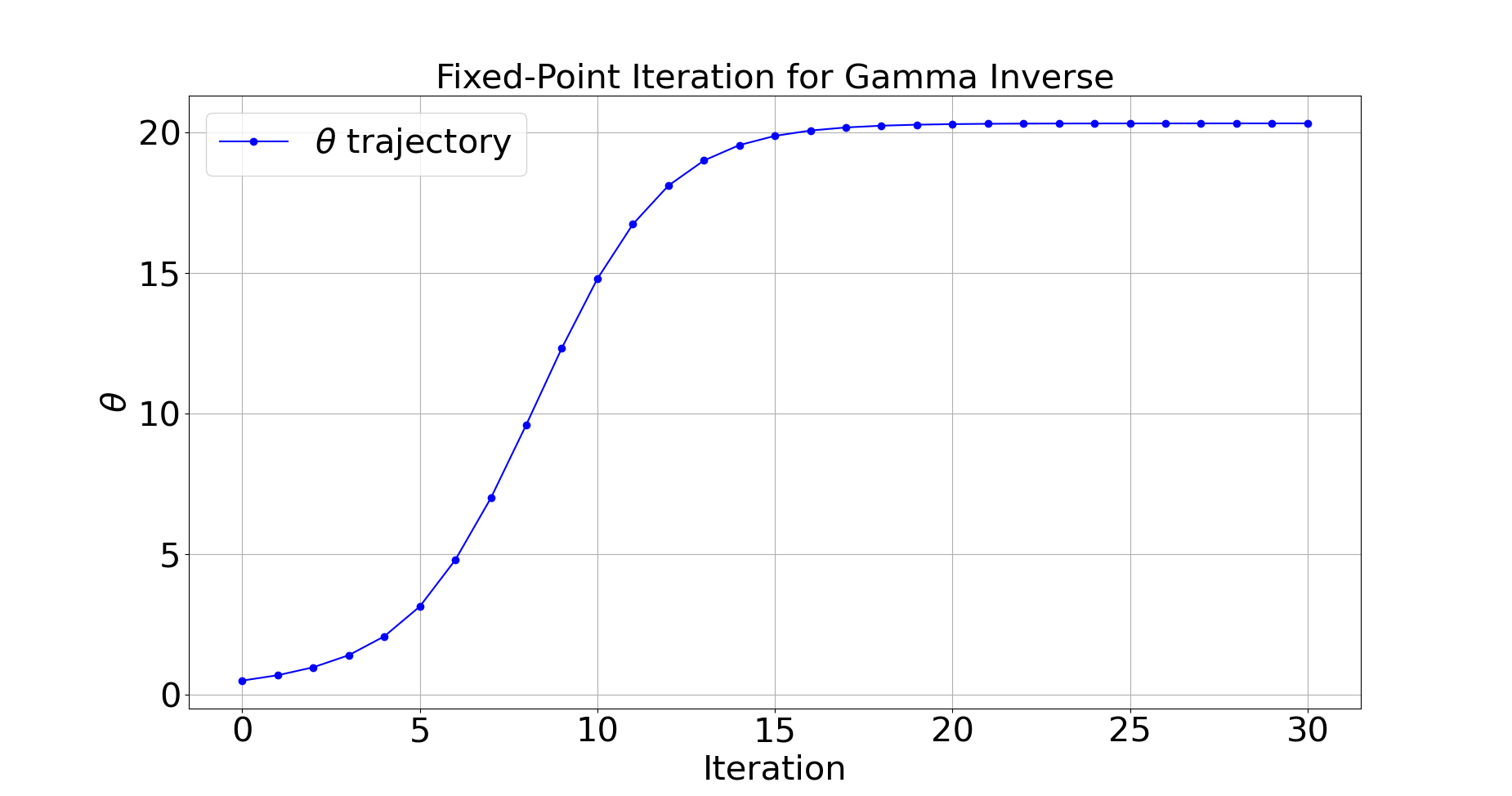}
    \caption{Trajectory of $q_1$ under Algorithm~\ref{algo-two-flow-case}, illustrating convergence to the optimal value $\theta^* = 20.312$. The parameters for this case are $b = [10, 15, 10]$, $r = [1, 2, 1]$, $d = [1, 1, 0.5]$, $c = 100$, and $L = 1$, with an initial guess of $\theta_0 = 0.5$.}

    \label{n_flow_fixed_point}
\end{figure}

In Figure~\ref{n_flow_fixed_point}, the trajectory of $\theta$, as determined by the  Algorithm~\ref{algo-n-flow-case}, is shown to converge to the optimal value $\theta^*$. The specified parameters for this case are $b=[10,15,10], r=[1,2,1], d=[1,1,0.5], c=100,$ and $L=1$. Starting with an initial guess of $\theta_0=0.5$, the algorithm successfully converges to the optimal solution, $\theta^* = 20.312$.

\begin{table}[htbp!]
\centering
\begin{tabular}{|p{0.5cm}|p{0.45cm}|p{0.72cm}|p{0.55cm}|p{0.71cm}|p{0.77cm}|p{0.76cm}|p{0.95cm}|}
\hline
No. Flows & Burst (Kb) & Packet Length (Kb) & Rate (Kb/s) & Delay Target (ms) & Capacity (Kb/s) & Optimal Quanta (b) & Simulated Delay (ms) \\
\hline
\multirow{2}{*}{2} & 10 & \multirow{2}{*}{3} & 1 & 1 & \multirow{2}{*}{40} & 14000 & 0.2 \\
& 10 &  & 1 & 1 &  & 14000 & 0.2 \\
\hline
\multirow{2}{*}{2} & 10 & \multirow{2}{*}{5} & 1 & 1 & \multirow{2}{*}{80} & 11764 & 0.5 \\
& 15 &  & 2 & 0.5 &  & 28571 & 0.1 \\
\hline
 \multirow{3}{*}{3} & 10 & \multirow{3}{*}{5} & 1 & 1 & \multirow{3}{*}{150} & 1144 & 0.5 \\
& 15 &  & 2 & 0.5 &  & 4211 & 0.2 \\
& 15 &  & 2 & 0.5 &  & 4211 & 0.2 \\
\hline
\multirow{4}{*}{4} & 10 & \multirow{4}{*}{5} & 1 & 1 & \multirow{4}{*}{200} & 2641 & 0.6 \\
& 15 &  & 2 & 0.5 &  & 10593 & 0.2 \\
& 15 &  & 2 & 0.5 &  & 10593 & 0.2 \\
& 10 &  & 1 & 0.75 &  & 3872 & 0.4 \\
\hline

\end{tabular}
\caption{Quanta values from Algorithms~\ref{algo-two-flow-case} and~\ref{algo-n-flow-case} compared to DRR simulation delays.}
\label{Table_delay}
\end{table}

We simulate the DRR scheduler using the optimal quanta obtained from the convex optimization solution and observe the actual delays. Since the optimization provides real-valued quanta, we convert them to integer values by taking the floor function. This conversion does not guarantee that the delay requirement will be met exactly. However, our simulations show that the delay constraints are still satisfied, as shown in Table~\ref{Table_delay}. This is because the optimization approach was conservative, using an upper bound in the delay calculation, which leads to an over-designed solution. As a result, even after rounding down the quanta, the observed delays remain within acceptable limits.


\section{conclusion}\label{conclusion}
This paper presents an optimization framework designed to maximize the number of packets served in a round while optimizing total quanta and meeting delay requirements. We have formulated a convex optimization problem, derived the necessary conditions for the optimal solution, proposed two algorithms to find the optimal solution and proved the existence and uniqueness of a fixed point for the governing system of equations. Our theoretical results provide a solid understanding of the interaction between the two flows and how to maximize the total number of packets served in a round of DRR scheduler.
In the practical simulation, we mapped real-valued quanta to integer-valued quanta by applying the floor function, as integer quanta are commonly required in network systems. The simulation demonstrated that the delay requirements were successfully met. However, it is important to note that the simulation does not guarantee the delay constraints will always be satisfied in all cases. The use of the floor function introduces an approximation, which may not always hold, resulting in some uncertainty. Ensuring that the delay requirements are consistently met despite this approximation, remains an open problem. In future work, we plan to explore methods for establishing more robust guarantees to formally ensure delay constraints are satisfied. Additionally, we aim to refine our approach to manage better integer-valued constraints without compromising system performance and efficiency.

\bibliography{Ref}
\bibliographystyle{ieeetr}
\appendix
\subsection{ proof of Lemma~\ref{modifieddelaybound}}\label{app:lemma_proof}

\begin{proof}
Substituting the expressions for $\psi_i(b_i)$ and $\psi_i(\alpha_i(\tau_i))$ from \eqref{psi}, \eqref{phi} in the delay bound $D_i$ we get,

\begin{align}
    D_i &= \frac{b_i}{c} + \sum_{j \neq i}\left(\frac{q_j + L}{c}\right)+ \max\left\{\sum_{j \neq i}\left\lfloor \frac{b_i+L}{q_i}\right\rfloor\frac{q_j}{c}, \right.\\
    &\left. q_i - (b_i+L) \text{mod} q_i \left(\frac{1}{c}-\frac{1}{r_1} \right) + \sum_{j \neq i} \right.\nonumber\\
    &\left.\left(\left\lfloor \frac{\alpha_i(\tau_i) + L}{q_i}\right\rfloor \frac{q_j}{c}\right) \right\} \nonumber
\end{align}

We know that $\alpha_i(\tau_i) = b_i + q_i - (b_i + L) \text{mod} q_i$, 
Substituting the value of $\alpha_i(\tau_i)$ and removing the floor we get

\begin{align}
    D_i &\leq \frac{b_i + L}{c}\left(1 + \frac{\sum_{j \neq i} q_j}{q_i} \right) + \frac{\sum_{j \neq i}q_j}{c} + \frac{(n-2)L}{c} \\
    &(q_i-(b_i+L) \text{mod} q_i)\left(\frac{1}{c} - \frac{1}{r_i} + \frac{\sum_{j \neq i} q_j}{q_i c}\right) ^+ \nonumber
\end{align}

Now, we know that $0 \leq q_i - (b_i+L) \text{mod} q_i \leq q_i$, therefore we replace $q_i - (b_i+L) \text{mod} q_i$ with $q_i$ to get an upper bound,

\begin{align}
    D_i &\leq \frac{b_i + L}{c}\left(1 + \frac{\sum_{j \neq i} q_j}{q_i} \right) + \frac{\sum_{j \neq i}q_j}{c} + \frac{(n-2)L}{c} \nonumber \\
    &\left(q_i\left(\frac{1}{c} - \frac{1}{r_i} + \frac{\sum_{j \neq i} q_j}{q_i c}\right) \right)^+ \label{modifiedboundinteger}
\end{align}


\end{proof}

\subsection{Proof of Lemma~\ref{lemma:notemptymeansconvex}}\label{app:notemptymeansconvex}

\begin{proof}
If $h_2^1(q_1)$ is concave, then the set  $\{q_2 \leq h_2^1(q_1)\} = B_2$ is convex. 
The condition for $h_2^1(q_1)$ to be concave is given by:

\begin{align}
    \frac{2 c r_1 d_1}{r_1 + c} > b_1 + L. \tag{23}
\end{align}
If the function $h_2^1(q_1)$ is concave, then the set $(A^c \cap B_2)$ is a convex set.

Next, consider the slope of the half-space boundary $A^c$, which is given by:
\begin{align}
    \frac{c}{r_1} - 1. \tag{24}
\end{align}
The slope of $h_2^1(q_1)$ at $q_1 = 0$ is given by:
\begin{align}
    \frac{c d_1}{b_1 + L} - 1. \tag{25}
\end{align}
The slope of $h_2^1(q_1)$ as $q_1 \to \infty$ is given by, $\frac{c}{r_1} - 1$ and the function $h_2^1(q_1)$, The set $(A^c \cap B_2)$ is non-empty if the slope of $h_2^1(q_1)$ at $q_1 = 0$ is greater than the slope of the boundary of the half-space $A$. This condition can be expressed as:
\begin{align}
    d_1 r_1 > b_1 + L. \tag{26}
\end{align}
Furthermore, we observe that:
\begin{align}
    \frac{2 c r_1 d_1}{r_1 + c} > d_1 r_1 > b_1 + L. \tag{27}
\end{align}
Therefore, if the set $(A^c \cap B_2) \neq \emptyset$, then $(A^c \cap B_2)$ is convex.
\end{proof}

\subsection{proof of Lemma~\ref{unique-exist-2-flow}}\label{ap:unique-exist-2-flow}
\begin{proof}
Let us define,
\begin{align}
    T(q_1) = H_2(H_1(q_1))
\end{align}
We know that $H_1$ and $H_2$ are concave functions and are increasing, hence the function $T$ is also a concave function \cite{Boyd_Vandenberghe_2004}. We proceed in three steps: analyzing the behavior of $T'(q_1)$, proving existence, and proving uniqueness.

\textbf{Step 1: Behavior of $T'(q_1)$.}

From the assumptions, the derivative of $T$ is given by:
\begin{align}
T'(q_1) = H_2'(H_1(q_1)) H_1'(q_1).
\end{align}

At $q_1 = 0$, we are given:
\begin{align}
T'(0) = \left( \frac{cd_2}{b_2+L} - 1 \right)\left( \frac{cd_1}{b_1+L}  - 1 \right).
\end{align}

For the existence of a fixed point, we require that $T'(0) > 1$. This condition ensures that $T$ initially grows faster than the line $q_1$, which is critical for an intersection.

As $q_1 \to \infty$, concavity of $H_1$ and $H_2$ implies that the slopes decrease. Specifically, we are given:
\begin{align}
\lim_{q_1 \to \infty} T'(q_1) = 0.
\end{align}

Thus, $T(q_1)$ grows slower than $ q_1$ for large values of $q_1$.

\textbf{Step 2: Existence of a Fixed Point.}

At $ q_1 = 0$, we know $T(0) = H_2(H_1(0)) = 0$. For large $q_1$, since $T'(q_1) \to 0 $, we have:
\begin{align}
\lim_{q_1 \to \infty} T(q_1) < q_1.
\end{align}

By continuity of $ T $, there must exist a point $q_1^* > 0$ such that:
\begin{align}
T(q_1^*) = q_1^*.
\end{align}
This follows from the Intermediate Value Theorem, as $T(0) = 0$ and $ T(q_1) < q_1 $ for sufficiently large $ q_1 $.

\textbf{Step 3: Uniqueness of the Fixed Point.}
Let there exist two distinct fixed points $x_1$ and $x_2$ such that $T(x_1) = x_1$ and $T_2(x_2) = x_2$,
WLOG we assume $x_1 < x_2$. $T(x)$ is strictly increasing $T(x_1) < T(x_2)$. The line connecting $(T(x_1), x_1)$ and $(T(x_2), x_2)$ is $1$. By, mean value theorem there exist $c \in (x_1, x_2)$ such that $T^{'}(c) = 1$ \cite{rockafellar1970convex}. But, $T(x)$ is concave, $T(x)^{'}$ is decreasing and for any $x > c$ the slope is less than $1$, and it cannot intersect the line y = x at $x_2$. This contradicts the assumption that $x_2$ is a fixed point. 

\textbf{Step 4: Corresponding \( q_2^* \).}

From the definition \( T(q_1^*) = H_2(H_1(q_1^*)) = q_1^* \), we let:
\begin{align}
q_2^* = H_1(q_1^*).
\end{align}
\end{proof}

\subsection{Proof of Theorem~\ref{optimal-point}}\label{app:optimal-point}

\begin{proof}
From Theorem \ref{unique-exist-2-flow}, let the unique point that satisfies the set of equations \eqref{set-equation-2flow} be $q^* = (q_1^*, q_2^*)$. Assume $\hat{q} = (\hat{q_1}, \hat{q_2})$ is an optimal solution such that $\hat{q_1} + \hat{q_2} > q_1^* + q_2^*$. Without loss of generality (WLOG), assume $\hat{q_1} > q_1^*$.

From the feasibility condition and the fact that the set of equations $H_1(q_1) = q_2$ and $H_2(q_2) = q_1$ has a unique solution, we assume, under the assumption that $\hat{q_2} \leq H_1(\hat{q_1})$, that $\hat{q_1} < H_2(\hat{q_2})$:
\begin{align}
    \hat{q_2} \leq H_1(\hat{q_1}) &= H_1(q_1^*) + \left(H_1(\hat{q_1}) - H_1(q_1^*)\right).
\end{align}
Using the concavity of $H_1$ and the fact that $H_1(0) = 0$, we can write:
\begin{align}
    H_1'(q_1^*) &\leq \frac{H_1(q_1^*)}{q_1^*} = \frac{q_2^*}{q_1^*}.
\end{align}
By concavity again, we have:
\begin{align}
    H_1(\hat{q_1}) - H_1(q_1^*) &\leq H_1'(q_1^*) (\hat{q_1} - q_1^*),
\end{align}
which implies:
\begin{align}
    \hat{q_2} &\leq H_1(q_1^*) + \frac{q_2^*}{q_1^*} (\hat{q_1} - q_1^*) = \frac{q_2^* \hat{q_1}}{q_1^*}.
\end{align}

Thus, we have:
\begin{equation}\label{optimality-condition-1}
    \frac{\hat{q_2}}{\hat{q_1}} \leq \frac{q_2^*}{q_1^*}.
\end{equation}

\textbf{Case 1:} Suppose $\hat{q_2} \geq q_2^*$.  
\begin{align}
    \hat{q_1} < H_2(\hat{q_2}) &= H_2(q_2^*) + \left(H_2(\hat{q_2}) - H_2(q_2^*)\right).
\end{align}
We assume that $\hat{q_1} < H_2(\hat{q_2})$, because we know that the system of equations $H_1(q_1) = q_2$ and $H_2(q_2) = q_1$ has exactly one unique positive solution. Additionally, we have already assumed that $\hat{q_2} \leq H_1(\hat{q_1})$, which justifies the assumption $\hat{q_1} < H_2(\hat{q_2})$.

Using the concavity of $H_2$ and the fact that $\hat{q_2} > q_2^*$, we have:
\begin{align}
    H_2(\hat{q_2}) - H_2(q_2^*) &\leq H_2'(q_2^*) (\hat{q_2} - q_2^*),
\end{align}
where $H_2'(q_2^*) \leq \frac{q_1^*}{q_2^*}$. Therefore:
\begin{align}
    \hat{q_1} &< H_2(q_2^*) + \frac{q_1^*}{q_2^*} (\hat{q_2} - q_2^*) = \frac{q_1^* \hat{q_2}}{q_2^*}.
\end{align}
This gives:
\begin{equation}\label{optimality-condition-2}
    \frac{\hat{q_2}}{\hat{q_1}} > \frac{q_2^*}{q_1^*}.
\end{equation}

Combining inequalities \eqref{optimality-condition-1} and \eqref{optimality-condition-2}, we have a contradiction.

\textbf{Case 2:} Suppose $\hat{q_2} \leq q_2^*$.  
From the definition of $q^*$ and the monotonicity of $H_2$, we have:
\begin{align}
    \hat{q_1} > q_1^* = H_2(q_2^*) \geq H_2(\hat{q_2}) \geq \hat{q_1}.
\end{align}
This leads to the contradiction $\hat{q_1} > \hat{q_1}$, which is impossible.  

Therefore, Case 2 cannot occur.  

we conclude that $q^*$ is the optimal solution to the optimization problem \eqref{2-flow-opti}.
\end{proof}

\subsection{ proof of Theorem~\ref{fixed-point-iteration}}\label{app:fixed-point-iteration}
\begin{proof}
    We know that the function $T(q_1)$ is concave, increasing, and has a slope greater than 1 at $q_1 = 0$.
    Let $q_1^* > 0$ be such that $T(q_1^*) = q_1^*$.

    \textbf{Case 1:} Suppose $0 < x_0 < q_1^*$. Then, we know that
    \begin{align}
        x_k < T(x_k) = x_{k+1}.
    \end{align}
    This implies that $\{x_k\}$ is an increasing sequence, and since $q_1^*$ is an upper bound, we need to show that
    \begin{align}
        \lim_{k \to \infty} x_k = \sup \{x_k\} = q_1^*.
    \end{align}
    Suppose there exists some $x' < q_1^*$ such that $x' = \sup \{x_k\}$. Then, for any $x \in (x', q_1^*)$, we have
    \begin{align}
        x' < x < T(x') < T(x).
    \end{align}
    This contradicts the assumption that $x'$ is the least upper bound of $\{x_k\}$, implying that $q_1^*$ is the limit point of the sequence.

    \textbf{Case 2:} If $x_0 > q_1^*$, then a similar argument holds, but now $\{x_k\}$ forms a decreasing sequence lower-bounded by $q_1^*$. Thus, the sequence converges to $q_1^*$.

    Hence, the fixed-point iteration converges to a unique fixed point $q_1^* > 0$ that satisfies $T(q_1^*) = q_1^*$.
    \end{proof}
    




\end{document}